\definecolor{ForestGreen}{rgb}{0.1333,0.5451,0.1333}
\theoremstyle{plain}
\newtheorem{theorem}{Theorem}
\newtheorem{lemma}[theorem]{Lemma}
\newtheorem{invariant}{Invariant}
\newtheorem{observation}{Observation}
\theoremstyle{definition}
\newtheorem{definition}{Definition}
\newtheorem{remark}{Remark}
\newcommand{\istrut}[2][0]{\rule[- #1 mm]{0mm}{#1 mm}\rule{0mm}{#2 mm}}
\newcommand{\rb}[2]{\raisebox{#1 mm}[0mm][0mm]{#2}}
\newcommand{\ignore}[1]{}
\newcommand{\ang}[1]{\left\langle #1 \right\rangle}
\newcommand{\floor}[1]{\left\lfloor #1 \right\rfloor}
\newcommand{\bydef}{\stackrel{\mathrm{def}}{=}}
\newcommand{\Var}{\mathbf{Var}}
\newcommand{\last}{\ensuremath{\operatorname{last}}}
\def\E{\ensuremath{\mathbf{E}}}
\def\BB{\ensuremath{\mathsf{BB}}}
\def\sgn{\ensuremath{\mathrm{sgn}}}
\def\CORR{\ensuremath{\operatorname{corr}}}
\def\DEV{\ensuremath{\operatorname{dev}}}
\def\OCORR{\ensuremath{\operatorname{\sigma-corr}}}
\newcommand{\XMAX}{\ensuremath{X_{\mathrm{max}}}}
\newcommand{\KMAX}{\ensuremath{K_{\mathrm{max}}}}
\newcommand{\BrachaAgreement}{\textsf{Bracha-Agreement}}
\newcommand{\CoinFlip}{\ensuremath{\textsf{Coin-Flip}}}
\newcommand{\ReliableBroadcast}{\textsf{Reliable-Broadcast}}
\newcommand{\IteratedBlackboard}{\textsf{Iterated-Blackboard}}
\newcommand{\WeightUpdate}{\ensuremath{\mathsf{Weight\text{-}Update}}}
\newcommand{\RisingTide}{\ensuremath{\mathsf{Rising\text{-}Tide}}}
\newcommand{\poly}{\operatorname{poly}}
\newcommand{\Akeep}{A_{\operatorname{keep}}}
\newcommand{\Aflip}{A_{\operatorname{flip}}}
\newcommand{\Adec}{A_{\operatorname{dec}}}
\newcommand{\bias}{\operatorname{bias}}
\newcommand{\biasbar}{\overline{\bias}}
\title{Byzantine Agreement with Optimal Resilience\\
via Statistical Fraud Detection}
\author{Shang-En Huang\\
University of Michigan
\and
Seth Pettie\\
University of Michigan
\and 
Leqi Zhu\\
University of Michigan}
\date{}
\begin{document}

\maketitle

\begin{abstract}
Since the mid-1980s it has been known that 
Byzantine Agreement can be solved with probability 1 asynchronously, 
even against an omniscient, computationally unbounded adversary that 
can adaptively \emph{corrupt} up to $f<n/3$ parties. 
Moreover, the problem is insoluble with $f\geq n/3$ corruptions.
However, Bracha's~\cite{Bracha1987} 1984 protocol (see also Ben-Or~\cite{Ben-Or83}) 
achieved $f<n/3$ resilience 
at the cost of \emph{exponential} expected latency $2^{\Theta(n)}$, 
a bound that has never been improved in this model with $f=\floor{(n-1)/3}$ corruptions.

In this paper we prove that Byzantine Agreement in the asynchronous, full information model can be solved
with probability 1 against
an adaptive adversary that can corrupt $f<n/3$ parties, while incurring only
\emph{polynomial latency with high probability}.  
Our protocol follows earlier 
polynomial latency protocols 
of King and Saia~\cite{KingS2016,KingS2018}
and 
Huang, Pettie, and Zhu~\cite{HuangPZ22}, 
which had \emph{suboptimal}
resilience, namely $f \approx n/10^9$~\cite{KingS2016,KingS2018}
and $f<n/4$~\cite{HuangPZ22}, respectively.

Resilience $f=(n-1)/3$ is uniquely difficult as this is the point at which the influence of the 
Byzantine and honest players are of roughly 
equal strength.
The core technical problem we solve is to 
design a 
collective coin-flipping protocol that \emph{eventually} 
lets us flip a coin with an unambiguous outcome.  
In the beginning
the influence of the Byzantine players is too powerful
to overcome and they can essentially 
fix the coin's behavior at will.
We guarantee that after just a polynomial number of executions
of the coin-flipping protocol, either
(a) the Byzantine players fail to 
fix the behavior of the coin 
(thereby ending the game)
or 
(b) we can ``blacklist'' players such that the blacklisting rate for 
Byzantine players is at least as large as the blacklisting rate for 
good players.  The blacklisting criterion is based on a 
simple statistical test of \emph{fraud detection}.
\end{abstract}

\section{Introduction}

In the Byzantine Agreement problem~\cite{PeaseSL80,LamportSP82}, 
$n$ players begin with input values 
in $\{-1,1\}$ and each must \emph{decide} 
an output value in $\{-1,1\}$ subject to:
\begin{description}
\item[Agreement.] All uncorrupted players must {\bf decide} 
the same value (and only that value).
\item[Validity.] If all uncorrupted players {\bf decide} $v$, 
then at least one such player had $v$ as its input.
\item[Termination.] Each uncorrupted player terminates 
the protocol with probability 1.
\end{description}
The difficulty of this problem depends on the \emph{strength} of the adversary and assumptions on the communication medium.  We consider a standard asynchronous model of communication against a \emph{strong} adversary.  
Each player can send point-to-point messages to other players, which can be \emph{delayed} arbitrarily by the adversary, but not dropped or forged.
In addition, the adversary is aware of the internal 
state of every player, is computationally unbounded, 
and may adaptively \emph{corrupt} 
up to $f$ players; these are also known as \emph{Byzantine} players.
Once corrupted, the behavior of 
a player is arbitrary, 
and assumed to be controlled by the adversary.
Following Ben-Or~\cite{Ben-Or83}, Bracha~\cite{Bracha1987} proved that Byzantine Agreement can be solved in this model when 
$f<n/3$.  
The protocols of Ben-Or and Bracha 
are not entirely satisfactory because 
they have latency \emph{exponential} in $n$.
(In the asynchronous model, 
a protocol has \emph{latency} $L$ if, in 
an execution in which every 
message delay is bounded by $\Delta$ and all 
local computation is instantaneous, each player
halts by time $L\Delta$.\footnote{Note that $\Delta$ is unknown to the players and cannot be used to detect crash-failures.  Moreover, since there is no \emph{minimum} message delay, these conditions 
place no constraints on message delivery order.})

The Byzantine Agreement problem 
has been solved satisfactorily
in stronger communication 
models or against a weaker adversary
than the ones we assume.

\paragraph{Synchronized Communication.}
Lamport et al.~\cite{LamportSP82} proved that if communication occurs in \emph{synchronized} rounds, Byzantine Agreement can be obtained deterministically in $f+1$ rounds, where $f<n/3$.
Fischer and Lynch proved that round complexity $f+1$ is optimal~\cite{FischerL82}.  
The communication complexity 
of~\cite{LamportSP82} is 
exponential, and was later reduced to 
polynomial by Garay and Moses~\cite{GarayM98}.
Dwork, Lynch, and Stockmeyer~\cite{dwork1988consensus} developed agreement 
protocols under weakly synchronous models.

\paragraph{Impossibility Results.} 
Fischer, Lynch, and Patterson~\cite{FischerLP85} proved that the problem cannot be solved deterministically, in an \emph{asynchronous} system in which just \emph{one} player is 
subject to a crash failure.
This result is commonly known as \emph{FLP Impossibility}.
Thus, to solve Byzantine Agreement we must assume some level of synchronization
\emph{or} randomization.  
Even with randomization, the problem is insoluble
when $f\geq n/3$.  The proof of this result is straightforward in the asynchronous
model~\cite[Thm.~3]{BrachaT85} and 
more complicated in the 
synchronized model~\cite{PeaseSL80,FischerLM86}.

\paragraph{Cryptographic Solutions.}
Against a computationally bounded adversary cryptography becomes useful.  
Byzantine Agreement can be solved against such an adversary 
in $O(1)$ latency against an adversary controlling  $f<n/3$ 
players\footnote{This assumes that RSA encryption cannot be broken by a polynomially bounded adversary.}~\cite{CachinKS05,CanettiR93}; see also~\cite{BermanG93,FeldmanM97}.  

\paragraph{Non-Adaptive Adversaries.} 
The ability to \emph{adaptively} corrupt players is surprisingly powerful.  
Goldwasser, Pavlov, and Vaikuntanathan~\cite{GoldwasserPV06}, improving~\cite{ben2006byzantine},
considered a \emph{synchronized}, full information model in which the adversary corrupts 
up to $f$ players up front, i.e., it is \emph{non-adaptive}.  
They proved that Byzantine Agreement can be solved with resiliency $f<n/(3+\epsilon)$ in $O(\log n/\epsilon^2)$ rounds.  
See Chor and Coan~\cite{chor1985simple} for prior results in similar adversarial models~\cite{chor1985simple}.

Kapron et al.~\cite{KapronKKSS10} developed a Byzantine Agreement protocol 
in the \emph{asynchronous}, full information model, in which the 
adversary corrupts $f<n/(3+\epsilon)$ players non-adaptively.  
Their protocol has different parameterizations, and can achieve agreement 
in quasipolynomial latency with probability $1-1/\poly(n)$, 
or polylogarithmic latency with probability $1-1/\poly(\log n)$. 
When these protocols err, they do not satisfy the {\bf Agreement} and {\bf Termination} 
criteria, and may deadlock or terminate without agreement.

\paragraph{Limits of Fully Symmetric Protocols.} 
Lewko~\cite{Lewko11} proved that in the asynchronous, 
full information model, a certain class of ``fully symmetric'' Byzantine Agreement protocols has latency $2^{\Omega(n)}$ when $f=\Theta(n)$.
This class was designed to capture 
Ben-Or~\cite{Ben-Or83} and Bracha~\cite{Bracha1987}
but is broader.  Protocols in this class make state transitions
that depend on the \emph{set} of validated messages received, 
but may not take into account the transaction history of the sender.
In retrospect, Lewko's result can be seen as justifying two strikingly different approaches for improving~\cite{Ben-Or83,Bracha1987} in the asynchronous, full information model.  
The first is to break symmetry by having the 
players take on different roles:
this is necessary to implement 
Feige's \emph{lightest bin} rule and other routines
in Kapron et al.'s~\cite{KapronKKSS10} protocol.
The second is to stay broadly within 
the Ben-Or--Bracha framework, 
but periodically \emph{blacklist} players 
after accumulating statistical evidence of 
fraudulent coin flips in their transaction history.
This is the approach taken by King and Saia~\cite{KingS2016,KingS2018}
and Huang et al.~\cite{HuangPZ22}.

\paragraph{Fraud Detection.}
King and Saia~\cite{KingS2016,KingS2018} 
presented two Byzantine Agreement protocols with polynomial latency.  
The first uses \emph{exponential} local computation and is 
resilient to $f<n/400$ adaptive corruptions.  
The second uses polynomial local computation and is 
resilient to $n/(0.87 \times 10^9)$ adaptive corruptions.  
Huang, Pettie, and Zhu~\cite{HuangPZ22} 
recently proposed a different fraud detection mechanism,
which lead to a Byzantine Agreement protocol with 
polynomial latency and polynomial local 
computation that is
resilient to $f<n/4$ adaptive corruptions.  
However, the specific 
statistical tests 
used by Huang et al.~\cite{HuangPZ22} 
are incapable of closing the gap 
from $f<n/4$ to $f<n/3$.  
See \autoref{sect:prelim}.

\subsection{New Results}

One feature of the asynchronous model is that every 
player must perpetually entertain the possibility 
that $f$ players have \emph{crashed} and will never be heard from again.  
Thus, when $n=3f+1$, the number of fully participating players at any stage is $n-f=2f+1$ and up to $f$ of these players may be corrupt!  Among the set of \emph{participating} players, 
the good players hold the thinnest 
possible majority: $f+1$ vs.~$f$.

We develop a special coin-flipping protocol to be used in Bracha's framework~\cite{Bracha1987,Ben-Or83}
when the corrupt and non-corrupt players have roughly 
equal influence.
Initially all players have weight $1$.
The coin-flipping protocol has the property that if the corrupt players repeatedly foil its attempts to flip a global coin, 
then we can fractionally \emph{blacklist} players 
(reduce their weights) in such a way that the blacklisting rate for 
good players is only infinitesimally larger than the blacklisting rate for corrupt players. 
Specifically, we guarantee that among any $n-f=2f+1$ participating players, 
the total weight of the good players minus the total weight of the corrupt players 
is bounded away from zero.  Eventually all corrupt players have their weights reduced to zero 
(meaning they have no influence over 
the global coin protocol) 
and at this point, the \emph{scheduling}
power of the adversary is insufficient to 
fix the behavior of the global coin.  
Agreement is reached in a few more 
iterations of Bracha's algorithm, with 
high probability.

The final result is a randomized $f$-resilient 
Byzantine Agreement protocol with latency 
(expected and with high probability) $\tilde{O}(n^4\epsilon^{-8})$,
where $n=(3+\epsilon)f$, $\epsilon \geq 1/f$.  
In other words, the latency ranges between 
$n^4$ and $n^{12}$, depending on $\epsilon$.
This latency-resiliency tradeoff is always
at least as good as~\cite{HuangPZ22},
but is slower than~\cite{KingS2016,KingS2018}
when $f<n/400$ or $f<n/(0.87 \times 10^9)$ is sufficiently 
small; see Table~\ref{tab:BA}.

\begin{table} [H]
\begin{tabular}{l|l|l}
\multicolumn{1}{l}{\textsf{Citation}} 
& 
\multicolumn{1}{l}{\textsf{Resilience}} 
& 
\multicolumn{1}{l}{\textsf{Latency \hspace{1cm} Local Computation per Message}}\\\hline\hline
\rb{-3}{Ben-Or~\cite{Ben-Or83}}  & $f<n/5$\istrut[2]{4}       & $2^{\Theta(n)}$\hfill $\poly(n)$\\\cline{2-3}
                        & $f=O(\sqrt{tn})$\istrut[2]{4}  & $\exp(t^2)$ \hfill $\poly(n)$\\\hline
Bracha~\cite{Bracha1987}& $f<n/3$\istrut[2]{4}       & $2^{\Theta(n)}$\hfill $\poly(n)$\\\hline
\rb{-3}{King \& Saia~\cite{KingS2016,KingS2018}}
                        & $f<n/400$\istrut[2]{4}     & $\tilde{O}(n^{5/2})$ \hfill $\exp(n)$\\\cline{2-3}
                        & $f<n/(0.87 \times 10^9)$\istrut[2]{4} & $O(n^3)$ \hfill $\poly(n)$\\\hline
Huang, Pettie \& Zhu~\cite{HuangPZ22}    & $f<n/(4+\epsilon)$\istrut[2]{4} & $\tilde{O}(n^4/\epsilon^4)$ \hfill $\poly(n)$\\\hline
\textbf{new}            & $f<n/(3+\epsilon)$\istrut[2]{4}       & $\tilde{O}(n^{4}/\epsilon^{8})$ \hfill $\poly(n)$\\\hline\hline
\end{tabular}
\caption{\label{tab:BA}
Randomized Byzantine Agreement protocols in the asynchronous, full information model 
against an adaptive adversary.  Here $\epsilon = \Omega(1/n)$.}
\end{table}

\subsection{Organization of the Paper}

In \autoref{sect:prelim} we review Bracha's algorithm~\cite{Bracha1987},
the coin-flipping protocols of King \& Saia and Huang et al.~\cite{KingS2016,HuangPZ22},
and the fraud detection mechanism of Huang et al.~\cite{HuangPZ22}.
We then walk through a few failed attempts to 
improve its resiliency to $f<n/3$.
The specific structure of these failures motivates several design choices
in our final protocol, many of which will not make sense without 
understanding the vulnerabilities of alternate solutions.
\autoref{sect:protocol} presents 
the protocol and its analysis.
We conclude in \autoref{sect:conclusion} 
with some remarks and open problems.

\section{Preliminaries, Misfires, and Dead Ends}\label{sect:prelim}

Bracha's protocol~\cite{Bracha1987} is based on Ben-Or's protocol~\cite{Ben-Or83}.
It improves the resiliency of~\cite{Ben-Or83} from $f<n/5$ to $f<n/3$ by 
introducing two mechanisms that constrain the misbehavior of corrupt players.
\begin{description}
\item[Reliable Broadcast.] The point-to-point communication network can be used to implement a simple \ReliableBroadcast{} primitive~\cite{Bracha1987} resilient to $f<n/3$ corruptions.  
It guarantees that if any good player attempts to broadcast a value $v$, then every good player 
eventually \emph{accepts} $v$ and only $v$.  
Moreover, if any corrupt player initiates a broadcast, 
then either all good players accept the same value $v$, and only $v$, or all good players accept nothing.
See~\cite{Bracha1987} for details of this primitive.
\item[Validation.]  The \ReliableBroadcast{} primitive allows us to assume that all relevant communication is public, 
via broadcasts.  Fix any protocol $\mathcal{P}$ based on broadcasts.  
Informally, a player $p$ \emph{validates} a message $m$ originating from $q$ if $p$ has already accepted and validated
a set of broadcasts that, were they to be received by $q$, 
would have caused $q$ to make a 
suitable state transition according 
to $\mathcal{P}$ and broadcast $m$.
See~\cite{Bracha1987} 
for details of validation.
\end{description}

The reliable broadcast primitive prevents the 
adversary from sending 
conflicting messages to 
different players, 
or convincing one player
to accept a broadcast
and another not to.
The validation mechanism prevents it from making state transitions logically inconsistent with the protocol $\mathcal{P}$.
Note, however, that in general $\mathcal{P}$ is \emph{probabilistic} and validation permits a series of transitions
that are logically possible but statistically unlikely.  
In summary, the adversary is 
characterized by the following powers.
\begin{description}
    \item[Full Information \& Scheduling.] The adversary knows the internal state of all players and controls the order in which messages are delivered.  It may delay messages arbitrarily.
    \item[Corruption \& Coin Flipping.] The adversary may \emph{adaptively} corrupt up to $f$ players as the execution of 
    the protocol progresses.  Once corrupted, a player \underline{\emph{continues to follow protocol}}, 
    except the adversary now chooses the outcomes of all of its coin flips.
\end{description}

\begin{algorithm}[H]
    \caption{\BrachaAgreement() \ \emph{from the perspective of player $p$}}
\begin{algorithmic}[1]
\Require{$v_p\in \{-1,1\}$.}
\Loop
\State
\ReliableBroadcast{} $v_{p}$ and \textbf{wait} until $n-f$ messages are validated from some set of players $S_p$.
{\par \enspace}
set $v_p \gets \sgn(\sum_{q\in S_p} v_q)$. \label{line:Bracha1}
\Comment{$\sgn(x)=1$ if $x\ge 0$ and $-1$ otherwise.}

\State
\ReliableBroadcast{} $v_{p}$ and \textbf{wait} until $n-f$ messages are validated.  
{\par \enspace}
\textbf{if} more than $n/2$ messages have some value $v^*$ \textbf{then} set $v_p \gets v^*$,
\textbf{otherwise} set $v_p \gets \bot$.
\label{line:Bracha2}

\State
\ReliableBroadcast{} $v_{p}$ and \textbf{wait} until $n-f$ messages 
are validated. 
{\par \enspace}
let $x_p$ be the number of $v^* \neq \bot$ messages validated by $p$.
\label{line:Bracha3}

\If{$x_p \geq 1$}\label{line:firstif}
    \State set $v_p \gets v^*$. \label{line:fplus1}
\EndIf

\If{$x_p \geq f+1$}\label{line:secondif}
    \State \textbf{decide} $v^*$. \label{line:decide-v}
\EndIf

\If{$x_p=0$} \label{line:thirdif}
    \State $v_p \gets \CoinFlip()$. \label{line:coinflip}
\Comment{Returns value in $\{-1,1\}$.}
\EndIf
\EndLoop
\end{algorithmic}
\end{algorithm}

The \BrachaAgreement{} protocol loops until all players \textbf{decide} a value.
As we will see, if any non-corrupt player \textbf{decides} in iteration $i$, then all non-corrupt players will \textbf{decide} by iteration $i+1$.
Suppose that, at Line~\ref{line:Bracha1}, 
a supermajority of
at least $(n+f+1)/2$ players 
hold the same value $v^*$.  
It follows that in any set $S_p$ of $n-f$ messages, 
$v^*$ will be in the majority, 
and all players will be forced to adopt $v^*$.\footnote{A corrupt player may successfully
broadcast the message $-v^*$ in Line~\ref{line:Bracha2} of \BrachaAgreement,
but no good player will \emph{validate} this 
message as it cannot be justified by any $n-f$ 
messages broadcast in Line~\ref{line:Bracha1}.} 
Thus, 
after Line~\ref{line:Bracha2}, all players 
will see $n-f > n/2$ messages containing $v^*$ and again broadcast $v^*$ in Line~\ref{line:Bracha3}, setting $x_p = n-f \geq f+1$ and \textbf{deciding} $v^*$ in Line~\ref{line:decide-v}.
In other words, if the adversary is to prolong the execution of this protocol,
it must avoid supermajorities of $(n+f+1)/2$ or more in every iteration.

Suppose the populations holding $-1$ and $1$ at Line~\ref{line:Bracha1} are 
more balanced. 
The scheduling power of the adversary is sufficient to fix $v_p$ arbitrarily 
in Lines~\ref{line:Bracha1} and \ref{line:Bracha2}.  
Note, however, that regardless of how the messages are delivered,
it is always the case that
$|x_p-x_q|\leq f$ for any $p,q$.  
Thus, if $\Akeep,\Adec,\Aflip$ are the populations with $x_p \in [1,f]$, $x_p\in [f+1,n-f]$,
and $x_p=0$, respectively, then either 
$\Aflip=\emptyset$ or $\Adec=\emptyset$.  
In the former case, a supermajority of players
hold the majority value $v^*$ 
and will \textbf{decide} 
in the current or following iteration.
In the latter case, 
the $\Akeep$ population keeps the value 
$v^*$ and the remaining population $\Aflip$
chooses their new value on the basis of a 
coin flip (Line~\ref{line:coinflip}).  
The key insight of \cite{Ben-Or83,Bracha1987} is that if 
\emph{all} good players in the second population $\Aflip$ 
happen to have $\CoinFlip$ return
$v^*$, then at least $n-f \geq (n+f+1)/2$ players will
go into the next iteration with the same value, 
and, according to the argument above, \textbf{decide} $v^*$ in that iteration. 

The probability that private coin-flipping creates a sufficiently large supermajority by chance is $2^{-\Theta(n)}$ when $f=\Theta(n)$.
In expectation the number of trials is $2^{\Theta(n)}$, 
hence the exponential expected latency 
of~\cite{Ben-Or83,Bracha1987}.
The protocols of Rabin~\cite{Rabin83} and Toueg~\cite{Toueg84} operate in a
similar fashion, but assume away the difficulty by supposing there is some mechanism to flip a shared coin, 
or reliably distribute shared randomness to players 
before the protocol begins.

\medskip 

The protocols of King and Saia~\cite{KingS2016,KingS2018} and Huang et al.~\cite{HuangPZ22} follow \BrachaAgreement, 
but implement $\CoinFlip$ (Line~\ref{line:coinflip}) 
as a ``global'' coin, 
which aims for two desirable, but ultimately unattainable guarantees:
\begin{description}
\item[Property (i)] all players agree on the \emph{same} value returned by $\CoinFlip()$, and 
\item[Property (ii)] the output of $\CoinFlip()$ is close to unbiased.
\end{description}
The problem of flipping a bounded-bias coin against adversarial 
manipulation is well studied.  
The problem can be solved against surprisingly large coalitions of corrupt players~\cite{Ben-OrL85,AjtaiL93,Saks89,AlonN93,RussellSZ02,HaitnerK20,KahnKL88,Feige99,BoppanaN00}.  This body of work assumes reliable communication (no dropped or delayed messages) and reliable computation (no crash failures).
Aspnes~\cite{Aspnes98} gave a lower bound that models aspects of an adaptive adversary in an asynchronous network.  In his coin-flipping game,
a vector of values $(v_1,\ldots,v_N)$ is generated as follows.  
Once $(v_1,\ldots,v_{i-1})$ are known, a random value $v_i'$ is 
generated\footnote{The distribution and range of $v_i'$ are arbitrary,
and may depend on $(v_1,\ldots,v_{i-1})$.} and the adversary may 
set $v_i\gets v_i'$ or \emph{suppress} it, setting $v_i\gets \bot$.
The outcome of the coin flip is some function $g(v_1,\ldots,v_{N})\in\{-1,1\}$.
If the adversary can suppress $t$ values, then $N=\Omega(t^2)$ for $g$ to have constant bias 
and $N =\Omega(t^2/\log^2 t)$ if the probability that $g=1$ and $-1$ are 
at least $1/\poly(t)$.  Aspnes~\cite{Aspnes98} proved that this result implies $\tilde{\Omega}(n)$ latency lower bounds on Byzantine Agreement in the asynchronous model, which was improved to $\Omega(n)$ by Attiya and Censor-Hillel~\cite{attiya2008tight}.  The moral of~\cite{Aspnes98,attiya2008tight} and related lower bounds against \emph{adaptive} adversaries, such as Haitner and Karidi-Heller's~\cite{HaitnerK20}, 
is that the aggregation function $g$ that implements
majority voting is at least close to optimal.
However, 
the Byzantine Agreement protocols against \emph{non-adaptive} adversaries,
such as~\cite{ben2006byzantine,GoldwasserPV06,KapronKKSS10} can afford to implement clever coin-flipping protocols that are not based exclusively on majority voting~\cite{Feige99,RussellSZ02}. 

\medskip 

The coin-flipping protocols of King and Saia~\cite{KingS2016} and Huang et al.~\cite{HuangPZ22} 
do not attempt to guarantee Properties (i) and (ii) immediately.  
Rather, after a sufficiently large number of invocations of $\CoinFlip$, if the adversary foils Properties (i,ii), 
it will leave behind enough statistical evidence that proves incriminating, allowing us 
to \emph{blacklist} suspicious players,
removing their explicit influence over subsequent calls to $\CoinFlip$.
When all corrupt players are blacklisted, 
the adversary still has the power of scheduling, 
but this power is insufficient to significantly 
delay agreement. 

The basis of King and Saia's~\cite{KingS2016} implementation of \CoinFlip{} is a shared \emph{blackboard} primitive, which was improved by Kimmett~\cite{Kimmett2020}, and again by Huang et al.~\cite{HuangPZ22}.   
A blackboard is an $m\times n$ matrix $\BB$, initially all blank ($\bot$).  The goal is to have each player $i$ write $m$ values successively to column $i$ (via \ReliableBroadcast{}s), 
and once the blackboard is full, to have all players agree on its contents.  Because up to $f$ players may crash, a \emph{full} blackboard is one in which $n-f$ columns have $m$ writes, and the remaining $f$ columns may be \emph{partial}.  
Due to the scheduling power of the adversary, every player $p$ 
sees a slightly different version $\BB^{(p)}$ of the ``true'' blackboard $\BB$, which is derived by replacing with $\bot$ 
the last write in some of the $f$ partial columns. Thus, $\BB^{(p)}$ and $\BB^{(q)}$ differ in at most $f$ entries.
In~\cite{KingS2016,HuangPZ22}
the $\CoinFlip$ routine is implemented as follows: 
every write to $\BB$ is a value in $\{-1,1\}$ chosen uniformly at random.
When $p$ finishes participating in the construction of $\BB$ it has
a view $\BB^{(p)}$ and sets the output of $\CoinFlip$ to be
$\sgn(\Sigma^{(p)})$, where $\Sigma^{(p)} = \sum_{j,q} \BB^{(p)}(j,q))$,
treating $\bot$s as zero.
Note that whenever $\Sigma^{(p)} \not\in [-f,f]$, 
$p$ can be sure that $\CoinFlip$ generates the same output for all players, even corrupt ones.\footnote{In particular,
in Line~\ref{line:Bracha1} 
of \BrachaAgreement,
if a message $v_q$ broadcast 
from $q$ is purportedly the
output of the last iteration's call 
to $\CoinFlip()$, player $p$ will validate
it only if it is \emph{possible} that 
$\sgn(\Sigma^{(q)}) = v_q$.}

We can of course execute the blackboard primitive iteratively~\cite{KingS2016}, 
but two players may disagree on the contents of \emph{each} blackboard in up to $f$ cells.  The \IteratedBlackboard{} protocol of Huang et al.~\cite{HuangPZ22} guarantees a stronger form of agreement. 
An iterated blackboard is an endless sequence $(\BB_1,\BB_2,\ldots)$ of blackboards, where $\BB_t$
is an $m(t)\times n$ matrix.  
After player $p$ finishes participating in 
the construction of $\BB_t$, it has a view $\BB^{(p,t)} = (\BB_1^{(p,t)},\ldots,\BB_t^{(p,t)})$ of the first $t$ blackboards.  It is guaranteed that $\BB_t^{(p,t)}$ and $\BB_t^{(q,t)}$ differ in at most $f$ cells in partial columns; it is also guaranteed that $\BB^{(p,t)}$ and $\BB^{(q,t)}$ differ in at most $f$ cells \emph{in total}, over all $t$ blackboards.  
In order to make this type of guarantee, during the construction of $\BB_{t+1}$, $p$ may record \emph{retroactive updates} to an earlier $\BB_{t'}$, $t'\leq t$, so that $\BB_{t'}^{(p,t+1)}$ 
records some writes 
to cells that were still $\bot$ in $\BB_{t'}^{(p,t)}$.

\begin{theorem}\label{thm:blackboard}
There is a protocol for $n$ players 
to generate an iterated 
blackboard $\BB$ that is resilient to $f<n/3$ Byzantine failures.
For $t \geq 1$, the following properties hold:
\begin{enumerate} 
    \item \label{thm:blackboard-full-partial} 
    Upon completion of the matrix $\BB_t$, each column consists of a 
    prefix of non-$\bot$ values and a suffix of all-$\bot$ values. 
    Let $\last(i) = (t',r)$ be the position of the last value written by
    player $i$, i.e., 
    $\BB_{t'}(r,i)\neq \bot$ and if $t'<t$ then $i$ has not 
    written to any cells of $\BB_t$.
    When $\BB_{t}$ is complete, 
    it has at least $n-f$ full columns and up to
    $f$ partial columns, i.e., $\last(i)\geq (t,m(t))$ for at least $n-f$ values of $i$.
    \item \label{thm:blackboard-ambiguous} 
    Once $p$ finishes participating in the construction of $\BB_{t}$, 
    it forms a historical view of the first $t$ blackboards 
    $\BB^{(p,t)} = (\BB_1^{(p,t)},\ldots,\BB_t^{(p,t)})$ 
    such that for every $t'\in [t]$, $i\in [n]$, $r\in [m(t)]$,
    \[
    \BB_{t'}^{(p,t)}(r,i) \left\{
    \begin{array}{l@{\hspace{1cm}}l}
    = \BB_{t'}(r,i)       & \mbox{if $\last(i)\neq (t',r)$,}\\
    \in \{\BB_{t'}(r,i), \bot\} & \mbox{otherwise.}
    \end{array}\right.
    \]
    Moreover, $\BB_{t'}^{(p,t)}(r,i) \neq \BB_{t'}(r,i)$ for at most $f$ tuples 
    $(t',r,i)$.
    
    \item If $q$ writes any non-$\bot$ value to $\BB_{t+1}$,
    then by the time any player $p$ fixes $\BB^{(p,t+1)}$, 
    $p$ will be aware of $q$'s view $\BB^{(q,t)}$ of the history up to blackboard $t$.
    
    \item The latency of constructing $\BB^{(p,t)}$ is $O(\sum_{t'\leq t} m(t'))$.
\end{enumerate}
\end{theorem}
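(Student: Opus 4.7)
The plan is to build the iterated blackboard by chaining single-blackboard protocols together with explicit \emph{handover} messages that propagate historical views across iterations. Each write $\BB_t(r,i)$ is performed by player $i$ via \ReliableBroadcast; player $p$ validates such a write only after validating every earlier write by $i$ (both within $\BB_t$ and across all prior blackboards), so $p$'s view of any column is always a contiguous prefix of the true column. Before placing its first write in $\BB_{t+1}$, player $q$ reliably broadcasts a handover $H_q^{(t)}$ encoding $\BB^{(q,t)}$; $p$ validates $H_q^{(t)}$ only if it is consistent with every write $p$ has already accepted, and $p$ validates any write of $q$ to $\BB_{t+1}$ only after $H_q^{(t)}$ is validated. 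Player $p$ declares $\BB_t$ complete once it has validated $n-f$ full columns of $\BB_t$ (with the handovers that justify them), and then incorporates every handover it has validated to retroactively fill in cells of $\BB_1,\ldots,\BB_{t-1}$ before fixing $\BB^{(p,t)}$.

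Properties 1, 3, and 4 are short. Property 1 follows from the liveness of Reliable Broadcast on the $n-f$ good senders, each of whom eventually completes its $m(t)$-cell column, so $p$ eventually sees $n-f$ full columns. Property 4 follows because each Reliable Broadcast takes $O(1)$ asynchronous rounds and each $\BB_{t'}$ requires $O(m(t'))$ sequential writes per column, so the latency through $\BB_t$ is $O(\sum_{t'\leq t} m(t'))$. For Property 3, the validation rules force $p$ to validate $H_q^{(t)}$ before accepting any write of $q$ to $\BB_{t+1}$; since $p$ can only finalize $\BB^{(p,t+1)}$ after validating $n-f$ full columns of $\BB_{t+1}$, $p$ has learned $\BB^{(q,t)}$ for every contributor $q$ to $\BB_{t+1}$. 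Within a single blackboard, the sequential-prefix structure of every column immediately forces any disagreement $\BB_t^{(p,t)}(r,i) \neq \BB_t(r,i)$ to occur at the top of a partial column, i.e., at $\last(i)$, and there are at most $f$ partial columns.

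The global bound in Property 2 is the main obstacle. One must show that across all $t$ blackboards, $p$'s view $\BB^{(p,t)}$ differs from $\BB$ in at most $f$ cells and each difference sits exactly at some $\last(i)$. The scenario to rule out is a slow column $i$ whose true $\last(i)$ lies deep inside some earlier blackboard $\BB_{t'}$, while $p$'s direct view of $i$'s column has barely any validated cells, creating many simultaneous disagreements at non-last positions. The defense is that any contributor $q$ to $\BB_t$ had to produce a handover $H_q^{(t-1)}$ encoding $\BB^{(q,t-1)}$; combining the handovers from the $n-f$ full contributors to $\BB_t$, $p$ learns every cell of $\BB_1,\ldots,\BB_{t-1}$ that at least one of them has validated. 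Formally, I would prove by induction on $t$ that (a) $p$'s view of each column is a contiguous prefix of the true column, (b) no non-last cell disagrees with the true blackboard, and (c) the number of columns whose $\last$ cell $p$ has not yet recovered is bounded by the number of partial columns of $\BB_t$ in $p$'s view, which is at most $f$.

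The principal technical hurdle is setting up the induction and the validation rules simultaneously: the rules must be strict enough to reject any inconsistent or fabricated handover from a corrupt player (leveraging Reliable Broadcast's $f<n/3$ guarantee), yet permissive enough that an honest-but-slow handover is still validated by a player $p$ whose own view currently lags. Getting this balance right, and verifying that each handover $H_q^{(t-1)}$ collectively plugs every non-last hole except those corresponding to the $\leq f$ partial columns, is where essentially all of the work lives.
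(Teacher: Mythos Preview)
The paper does not prove this theorem: it is imported from \cite{HuangPZ22} and stated here without proof, so there is no in-paper argument to compare your proposal against.

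On its own merits, your outline is reasonable and you correctly isolate the global $f$-cell bound in Property~2 as the crux. But the handover mechanism as you describe it does not close the gap you yourself flag at the end. The handover $H_q^{(t-1)}$ freezes $q$'s view \emph{at the moment $q$ finished $\BB_{t-1}$}. A slow player $i$ with $\last(i)=(t',r)$ may well produce cells $(t',r')$ for $r'<r$ \emph{after} every fast contributor $q$ has already frozen its handover, yet before $p$ declares $\BB_t$ complete. Nothing in your validation rules forces those intermediate writes into any handover that $p$ ever sees, so your claim~(b) --- that no non-last cell disagrees --- does not follow from the union of the $n-f$ handovers; $p$'s prefix of column $i$ could in principle stop well short of $\last(i)-1$. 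A standard repair is an acknowledgment discipline on the writes themselves: $i$ may advance to row $r$ only after $n-f$ players have acknowledged row $r-1$. Since any two $(n-f)$-quorums intersect in more than $f$ players, and hence in at least one good player, $p$'s set of full-column contributors then necessarily contains a good witness to every non-last write of every $i$, and that witness's view can be carried forward. Your sketch names the right invariant but lacks this quorum-intersection ingredient (or an equivalent), and without it the induction for (b) does not go through.
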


\subsection{Coin Flipping and Fraud Detection}\label{subsect:coinflipping-frauddetection}

The King-Saia~\cite{KingS2016,KingS2018} and Huang et al.~\cite{HuangPZ22} protocols 
rely on the fact that the $f$ corrupt players, 
being a small minority, must collectively generate coin flips
whose sum is conspicuously large, as they must often \emph{counteract} the coin flips of $n-2f$ good players.\footnote{Recall that the adversary can fail to deliver messages of up to $f$ players in a timely fashion, so there can be as few as $f + (n-2f)$ players fully participating in any given blackboard/coin-flip.}
At the end of the $t$th iteration of 
$\BrachaAgreement$, the~\cite{KingS2016,HuangPZ22} protocols call $\CoinFlip$, 
which populates the blackboard $\BB_t$ 
with random $\{-1,1\}$ coin flips.  
Define $X_i(t)$ to be the sum 
of the coin flips in
$\BB_t(\cdot,i)$ generated by player $i$.
(Recall that 
the players are partitioned
into $\Akeep,\Aflip$, 
where those in $\Akeep$ will keep the majority value $v^*\in\{-1,1\}$, 
regardless of the outcome of $\CoinFlip$.
Nonetheless, \emph{every} player in 
$\Akeep\cup \Aflip$ 
participates in the $\CoinFlip$ protocol.)

At the very least the adversary wants at 
least one player in $\Aflip$ 
to believe the outcome of the global coin is $\sigma(t) = -v^*$, which is called the \emph{adversarial direction}.
Let $\Sigma_G(t)$ and $\Sigma_B(t)$ be the sum of the good and bad (corrupt)
coin flips written to $\BB_t$.  If $\sgn(\Sigma_G(t))=\sigma(t)$ then the adversary is happy, and if $\sgn(\Sigma_G(t))=-\sigma(t)$ then the adversary needs to counteract the good coin flips and get the total sum $\Sigma_G(t)+\Sigma_B(t)$ in the interval $[-f,f]$ in order for at least one player believe the coin flip outcome (sign of the sum) is $\sigma(t)$.
Thus, 
\[
|\Sigma_B(t)| \geq \max\{0, -\sigma(t)\Sigma_G(t)-f\}.
\]
$\Sigma_G(t)$ is the sum of at least $m(n-2f)=\Omega(mn)$ coin flips; suppose for simplicity that 
$\Sigma_G(t)$ is the sum of \emph{exactly} this many flips.
When $m\gg n > 3f$,  
$f$ is much smaller than the standard deviation of $\Sigma_G(t)$, so let us also ignore the ``$-f$'' term for simplicity.  
By symmetry, $\Sigma_G(t)$ is positive and negative with equal probability, so up to these simplifications,
$\E[\Sigma_B(t)^2] \geq \frac{1}{2}\E[\Sigma_G(t)^2] = \frac{1}{2}m(n-2f)$.\footnote{It is a small abuse of notation to measure the expectation of 
$\Sigma_B(t)^2$ since it has no well defined distribution.  The expectation is naturally w.r.t.~\emph{any} fixed adversarial strategy that convinces at least one player that the outcome of the global coin is $\sigma(t)$.}
On the other hand, if these $f$ bad players 
\emph{were} 
flipping fair coins 
then $\E[\Sigma_B(t)^2] \leq mf$.  

The statistics tracked by Huang et al.~\cite{HuangPZ22} are pairwise \emph{correlations} and individual \emph{deviations} over a series of calls to $\CoinFlip$.
\begin{align*}
    \CORR(i,j) &= \ang{X_i,X_j} = \sum_t X_i(t)X_j(t),\\
    \DEV(i)     &= \ang{X_i,X_i} = \sum_t (X_i(t))^2.
\end{align*}
Note that $(\Sigma_B(t))^2$ can be 
decomposed into terms that contribute
to $\CORR(i,j)$ scores ($X_i(t)X_j(t)$, $i\neq j$)
and $\DEV(i)$ scores ($X_i(t)^2$).
When $f<n/4$ there is a \emph{gap} between $\frac{1}{2}m(n-2f)$ and $mf$, which implies that after a sufficient number of iterations, 
either some bad player $i$ has an unusually large $\DEV(i)$ score, 
or two bad players $i,j$ have an unusually large $\CORR(i,j)$ score~\cite[Lemma 5]{HuangPZ22}.
\emph{Unusually large} here means one beyond what any good players 
flipping fair coins could generate, with high probability.  
If $\CORR(i,j)$ is unusually large, 
it follows that \emph{at least one} of 
$i,j$ must be bad.  The 
Huang et al.~\cite{HuangPZ22} protocol ``blacklists'' players according to a fractional matching in a graph on $[n]$ weighted by correlation and deviation scores, which ensures that bad players are blacklisted at the same or higher rate than the good players.

\subsection{Misfires and Dead Ends}

Our goal is to improve the resilience from $f<n/4$ to the optimal $f<n/3$.  As $f$ tends towards $n/3$, many natural statistics worth tracking lose traction, and ``$n/3$'' is the point at which coin-flipping games become perfectly balanced between the influence of $n-2f$ good and $f$ bad players.
For example, when $f=n/4$, $mf=\frac{1}{2}m(n-2f)$, and bad players may 
not be detected by tracking $\DEV(i)$ and $\CORR(i,j)$ scores alone.
When $n=3f+1$, we can assume $n-f=2f+1$ players fully participate in the coin-flipping protocol, at least $f+1$ of which are good and at most $f$ of which are bad.
To illustrate why this is a uniquely difficult setting to perform fraud detection, consider a simple \textbf{Mirror-Mimic} strategy deployed by the adversary.

\paragraph{Mirror-Mimic Strategy.}  When $\sgn(\Sigma_G(t)) = -\sigma(t)$, the adversary sets $\Sigma_B(t) = -\Sigma_G(t)$ (mirror).
When $\sgn(\Sigma_G(t)) = \sigma(t)$, it 
sets $\Sigma_B(t) = \Sigma_G(t)$ (mimic).
There is some flexibility in the mirror case as it only needs $\Sigma_B(t)+\Sigma_G(t)$ to hit the interval $[-f,f]$.  In any case, we do not expect to see large good-good $\CORR(i,j)$ scores outside of 
random noise,
nor large bad-bad correlations 
since they are mirroring/mimicking the distribution of good players.  Because the mirror/mimic cases occur about equally often, the aggregate positive correlations 
between good and bad players in the mimic case
and negative correlations between good and bad players in the mirror case cancel out.
Thus, against the mirror-mimic adversary, 
tracking pairwise correlations \emph{alone} seems 
insufficient to detect fraud.

\paragraph{$\sigma$-Correlation.} When we attempt to flip a global coin, the good players are generally unaware of the adversarial direction 
$\sigma(t)$\footnote{(if they all knew what it was, there would be no need to flip a coin)} but we can ensure that $\sigma(t)$ eventually becomes known, and can estimate \emph{$\sigma$-correlation} over the long term.
In the context of \BrachaAgreement, $\sigma(t)$ 
should be defined as:
\[
\sigma(t) = \left\{
\begin{array}{ll}
-v^* & \mbox{ if $\Akeep\neq \emptyset$ keeps the majority value $v^*\in\{-1,1\}$,}\\
0   & \mbox{ if $\Akeep = \emptyset$.}
\end{array}
\right.
\]
Define the $\sigma$-correlation score as:
\[
\OCORR(i) = \ang{\sigma, X_i} = \sum_t \sigma(t)X_i(t).
\]
Note that good players flip fair coins, 
so values of $\OCORR(i)$ that are inconsistent with random noise should indicate that $i$ is corrupt.  However, it is rather easy for the adversary to keep $\OCORR(i)$ scores close to zero for corrupt players as well.  Regardless of $\sgn(\Sigma_G(t))$, set $\Sigma_B(t) = -\Sigma_G(t)$, 
or at least put $\Sigma_B(t)+\Sigma_G(t)$ in the interval $[-f,f]$.
One may easily verify that this strategy is consistent with 
mirror-mimic when $\sgn(\Sigma_G(t))=-\sigma(t)$, but that it prescribes exactly the \emph{opposite} behavior when $\sgn(\Sigma_G(t))=\sigma(t)$!
In fact, there is no general strategy for setting $\Sigma_B(t)$ as a function of $\Sigma_G(t)$ and $\sigma(t)$ 
that keeps \emph{all} 
$\CORR(i,j)$ and $\OCORR(i)$ 
scores close to zero.\footnote{When $\sgn(\Sigma_G(t))=-\sigma(t)$, the adversary is \emph{forced} to set $\sigma(t)\Sigma_B(t) \geq -\sigma(t)\Sigma_G(t)$, increasing the aggregate $\sigma$-correlation of bad players and increasing the aggregate negative good-bad correlation.  When $\sgn(\Sigma_G(t))=\sigma(t)$, the adversary can choose to reverse either of these trends and exacerbate the other, i.e., reduce good-bad negative correlations but increase bad $\sigma$-correlations, or reduce bad $\sigma$-correlations but increase good-bad negative correlations.}

Tracking $\CORR(i,j)$ and $\OCORR(i)$ scores seems to be a winning combination, that will eventually let us blacklist individual players for having large $\OCORR(i)$ scores, or pairs of players for having large $-\CORR(i,j)$ scores.  In the latter case, we are blacklisting good and bad players at the same rate, which is fine so long as good players retain their slim majority ($f+1$ vs. $f$ initially) among any set of $n-f=2f+1$ participating players.

\paragraph{A Scheduling Attack.}  
There is a serious flaw in the reasoning above.
Recall that $\sigma(t)\in \{-1,0,1\}$,
where $\sigma(t)=0$ means that in \BrachaAgreement, the population $\Akeep=\emptyset$ committed to keeping the true majority value $v^*$ is empty.  
The value $v^*$ is determined by the scheduling of messages in Line~\ref{line:Bracha1}.
Whether $v^*$ becomes \emph{known} to any particular player in Line~\ref{line:Bracha2} is generally at the discretion of the adversarial scheduler. 
Thus, in general the adversary can control whether $\Akeep=\emptyset$, and hence 
whether $\sigma(t)=-v^*$ or $0$.  Moreover, because the protocol is asynchronous, it can even do so \emph{after} $\BB_t$ is populated with coin flips.\footnote{One player $p$ will be allowed to set $v_p=v^*$ in Line~\ref{line:Bracha2}, and $p$ will execute Line~\ref{line:Bracha3} slowly,
either validating its own message among the first $n-f$ or not, at the discretion of the scheduler.  In this way the scheduler decides if $\Akeep=\{p\}$ or $\emptyset$.}

These observations 
give rise to the following attack.  The adversary targets two good players $i_0,i_1$.  When 
$\CoinFlip$ 
is initiated the adversary has two choices for $\sigma(t) \in \{-v^*,0\}$ and can decide which late in the game.  If $\sgn(X_{i_0}(t))=\sgn(X_{i_1}(t)) = -v^*$, it sets $\sigma(t)=-v^*$; otherwise it sets $\sigma(t)=0$.  In general it makes sure $\Sigma_B(t)+\Sigma_G(t) \in [-f,f]$ so it can force roughly equal numbers of 
players to have \CoinFlip{} return $-1$ and $1$.  
Players $i_0,i_1$ will show unusually large $\sigma$-correlation and be blacklisted, and any other blacklisting (from negative correlations) will apply equally to good and bad players. 
At this point the \emph{corrupt} players have now attained a slim majority,
and are entirely content to let further blacklisting hurt good and bad players equally.

The problem here is that 
$\sigma(t)$ and $X_{i_0}(t)$ are 
\emph{not} independent.
In reality $\sigma(t)$ can be chosen maliciously \emph{after} $X_{i_0}(t)$ is known.

\paragraph{A Finger on the Scale.}
The issue with the previous scheme is that the notions of $\sigma(t)$, the population $\Akeep$, 
and even $v^*$ are too indeterminate.  
On the other hand, 
if any good player $p$ finds itself with $x_p\in[1,f]$, 
there is nothing indeterminate from $p$'s perspective
about the fact that 
$\{p\}\subseteq \Akeep\neq \emptyset$
or that $\sigma(t) = -v^*$.
This leads to a natural question: why should $p$
participate in the $\CoinFlip$ protocol 
\emph{as if it were ignorant of the 
desired outcome $v^*$}?  
Why not ``put a finger on the scale'' and 
just write $v^*$ to every entry 
in column $\BB_t(\cdot,p)$?
(We would naturally refrain from judging such special columns according to statistical tests, e.g. deviations and correlations.)

The problem with this simple minded scheme
is that if $|\Akeep|$ is small,
the adversary has the discretion to suppress
or allow $p\in \Akeep$ to write its column, or any prefix thereof.  
This allows for a mirror-mimic type attack, 
in which the sum of $\BB_t$ always 
lies in $[-f,f]$, 
and yet there are no negative correlations 
in aggregate between 
good players flipping fair coins and bad players.

\subsection{Overview of the Protocol}

To simplify the description of the 
coin-flipping problem, 
in \autoref{subsect:coinflipping-frauddetection}
we originally stated the 
adversary chooses $\sigma(t)=-v^*$, 
and has the goal 
to convince \emph{one} player to believe the
output of \CoinFlip{} is $\sigma(t)$. 
This is too conservative, 
and in fact makes the coin-flipping 
problem needlessly difficult.

\medskip
Consider the sizes of the sets $\Akeep$ and $\Aflip$.
There are two relevant cases to consider:
\begin{description}
\item[Case {$|\Akeep| \in [0,f]$}.]  
When $\Akeep\neq\emptyset$,
$\sigma(t)=-v^*$ is defined, and it would be bad for the adversary if everyone 
agreed the output of \CoinFlip{} were $v^*$, i.e., $\sgn(\Sigma^{(p)}) = v^*$ for all $p\in[n]$.
In fact, it would be equally bad for the adversary if $\sgn(\Sigma^{(p)}) = -v^*$ for all $p$.  
Since $\Aflip\geq n-f$, 
the next iteration of \BrachaAgreement{} 
would end in agreement since now a supermajority of 
$n-f \geq (n+f+1)/2$ holds the value $-v^*$.
To summarize,
when $|\Akeep|\leq f$, 
it is \emph{critical} for the adversary 
to create \emph{disagreement} on the outcome 
of the global coin flip.
\item[Case $|\Akeep| \geq f+1$.] If this is the case, then some kind of ``finger on the scale'' strategy should force the outcome of the coin flip to be $v^*$.  Any player that validates
the state of $n-f$ players must necessarily validate the state of some $p\in \Akeep$,
and hence learn the value of $v^*$.  If any player that knows $v^*$ writes only $v^*$ to its entries in the blackboard, this will surely be the outcome of the global coin flip.
\end{description}

In light of this dichotomy on the size of $\Akeep$, 
we design a coin-flipping protocol that 
(i) forces all players to see the same outcome $v^*$ whenever 
$|\Akeep|\geq f+1$ 
--- thereby letting \BrachaAgreement{} terminate --- 
or 
(ii) reverts to a more standard 
collective coin-flipping game in which the adversary is obligated to land the sum 
in the interval $[-f,f]$.
Because of the certainty of the outcome in case (i) 
and the specific strategy forced upon the adversary in case (ii),
fraud can now be detected by tracking just one statistic: 
the \emph{weighted} correlation 
scores $\CORR(i,j)$ between all pairs of players.

\section{A Protocol with Optimal Resilience}\label{sect:protocol}

The protocol consists of $O(f)$ \emph{epochs}. 
Each epoch consists of $T$ 
iterations of \BrachaAgreement, 
the last step of which is to execute
\CoinFlip{} (Algorithm~\ref{alg:coinflip}).
The $t$th execution of
\CoinFlip{} constructs two blackboards
$\BB_{2t-1}$ and $\BB_{2t}$, 
where the odd blackboards have $\tilde{O}(\sqrt{m})$ rows and
the even blackboards have $m$ rows,
for an $m\gg n$ to be determined.
We write $n=(3+\epsilon)f$, where we assume without loss of generality that $\epsilon\in [1/f,1/2]$.

Each player $i$ has a \emph{weight} 
$w_i \in [0,1]$, initially $1$ and non-increasing over time.  
(The coin flips generated by player 
$i$ will be weighted by $w_i$.)
At the end of each epoch the weights 
of some players will be reduced, 
which one can think of 
as \emph{fractional blacklisting}.  
Let $(w_{i,k})$ denote the weights used
throughout epoch $k$, and $(w_i)$ be the current weights if $k$ is understood from context.  We guarantee that Invariant~\ref{inv:weights} is maintained, with high probability.
Let $G$ and $B$ be the good and bad (Byzantine)
players at a given point in time.

\begin{invariant}\label{inv:weights}
At all times,
\[
\sum_{i\in G} (1-w_i) \leq \sum_{i\in B} (1-w_i) + \epsilon^4 f.
\]
\end{invariant}

\paragraph{Organization.}
In \autoref{sect:implementation-of-coinflip} we explain how \CoinFlip{} is implemented using the \IteratedBlackboard{} protocol~\cite{HuangPZ22}.
In \autoref{sect:neg-correlations} we prove that if the adversary persistently manipulates the outcome of the \CoinFlip{} protocol, 
then there will be a detectable negative correlation between 
some bad player and some good player.
In \autoref{sect:weight-reduction} we give the procedure for reducing weights between epochs, and prove that it maintains Invariant~\ref{inv:weights} with high probability.
(The weight reduction routine uses the same fractional matching method as~\cite{HuangPZ22}.) 
In \autoref{sect:bounding-error} 
we prove that agreement is reached 
after $O(f)$ epochs, with high probability.

\subsection{Implementation of \CoinFlip}\label{sect:implementation-of-coinflip} 
\newcommand{\writeval}{\operatorname{val}}

Consider 
the $t$th iteration of \BrachaAgreement{} and
the $t$th execution of \CoinFlip.
When each player $p$ begins executing \CoinFlip, 
it has a value $v_p \in \{-1,1,\bot\}$, where
$v_p \in \{-1,1\}$ indicates that $v_p = v^*$ 
is \underline{\emph{the}} majority value at Line~\ref{line:Bracha2} of \BrachaAgreement,
and $v_p =\bot$ indicates that $p$ did not learn
the majority value and will adopt the output of \CoinFlip{}
as its value going into the next iteration of \BrachaAgreement.

Recall that $x_p$ is the number of $v^*\in\{-1,1\}$ messages validated by $p$ in Line~\ref{line:Bracha3} of \BrachaAgreement{} and $\Akeep$ is the set of all $p$
such that $x_p \in [1,f]$ before executing \CoinFlip.
The first stage of \CoinFlip{} is to populate a 
blackboard $\BB_{2t-1}$ that will help end the game 
quickly if $|\Akeep|\geq f+1$ and 
cause no harm if $|\Akeep| \in [0,f]$.
The contents of $\BB_{2t-1}$ are used to generate a \emph{bias}.
The second stage of \CoinFlip{} populates a blackboard
$\BB_{2t}$ with random values in $\{-1,1\}$.
Let 
\[
\Sigma = \sum_{j,q} w_q \cdot \BB_{2t}(j,q)
\]
be the \emph{weighted} sum of the contents of $\BB_{2t}$ (mapping $\bot$ to 0).  
The output of \CoinFlip{} is 
$\sgn(\bias + \Sigma)$.
Due to the scheduling power of the adversary, each player
has a slightly different view of these two blackboards.
Naturally $p$ outputs $\sgn(\bias^{(p)} + \Sigma^{(p)})$, 
where a superscript of $(p)$ in any variable 
indicates $p$'s opinion of its value.

\begin{algorithm}[H]
    \caption{\CoinFlip() \ \emph{from the perspective of player $p$}\label{alg:coinflip}}
\begin{algorithmic}[1]
\Require{$v_p\in \{v^*,\bot\}$, $v^*\in \{-1,1\}$, and 
$t\geq 1$ is current iteration of \BrachaAgreement.}
\State \emph{\textbf{Stage 1:}\istrut{4}}
\State \ReliableBroadcast{} $v_p$ and wait for $n-f$ messages to be validated from some set $S_p$ of players.
\State $\writeval_p \gets \left\{
\begin{array}{ll}
v^* & \mbox{ if $v_q=v^*$ for some $q\in S_p$,}\\
0 & \mbox{ if $v_q=\bot$ for all $q\in S_p$.}
\end{array}\right.$
\State Construct $\BB_{2t-1}$, writing $\writeval_p$ to every cell in column $p$.
\State \emph{\textbf{Stage 2:}\istrut{4}}
\State Construct $\BB_{2t}$, writing independent coin flips in $\{-1,1\}$ to cells in column $p$.
\State $\bias^{(p)} \gets \sum_{j,q} \BB_{2t-1}^{(p,2t)}(j,q)$ 
\Comment{Substitute $\bot = 0$}
\State $\Sigma^{(p)} \gets \sum_{j,q} w_q\cdot \BB_{2t}^{(p,2t)}(j,q)$
\Comment{Substitute $\bot = 0$}
\State \Return($\sgn(\bias^{(p)} + \Sigma^{(p)})$)
\end{algorithmic}
\end{algorithm}

The even and odd blackboards 
$\BB_{2t}$ and $\BB_{2t-1}$
have the following number of rows
\begin{align*}
m(2t)   &= m,\\
m(2t-1) &= m_0 = \sqrt{m\cdot c\ln n}.
\end{align*}
Here $c$ is a parameter, 
and \emph{with high probability} 
means an event holds
with probability $1-n^{-\Omega(c)}$.

\begin{lemma}\label{lem:bias-vs-Akeep}
If $|\Akeep|\geq f+1$
then
$v^*\cdot \bias 
> (n-f)m_0 
> \frac{2}{3}n\sqrt{m\cdot c\ln n}$.
If $|\Akeep| = 0$ then 
$\bias = 0$.
If $|\Akeep|\in [1,f]$ then 
$v^*\cdot \bias \in [0,nm_0]$, 
and can be selected by the adversary.
\end{lemma}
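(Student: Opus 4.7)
The plan is to case-split on $|\Akeep|$, using that $\writeval_p$ can only be $v^*$ or $0$ (since validation prevents anything else) and that \autoref{thm:blackboard} guarantees at least $n-f$ full columns of $\BB_{2t-1}$.

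For the case $|\Akeep|\ge f+1$, let $V=\{q:v_q=v^*\}$ be the set of players entering \CoinFlip{} already knowing the majority. I first show $\Akeep\subseteq V$: every $p\in\Akeep$ has $x_p\ge 1$ at Line~\ref{line:Bracha3} of \BrachaAgreement, which triggers $v_p\gets v^*$ in Line~\ref{line:fplus1}. Hence $|V|\ge f+1$. Each validated set $S_p$ used in Stage~1 has size $n-f$ and $n-|V|\le f<n-f$, so $S_p\cap V\ne\emptyset$, forcing $\writeval_p=v^*$ for every participating $p$. \autoref{thm:blackboard} then yields at least $n-f$ full columns in $\BB_{2t-1}$, each contributing $m_0$ to $v^*\cdot\bias$, giving $v^*\cdot\bias\ge(n-f)m_0$; combined with $n-f>2n/3$, this produces the claimed chain $v^*\cdot\bias\ge(n-f)m_0>\tfrac{2}{3}n\sqrt{mc\ln n}$.

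For the case $|\Akeep|=0$, the non-trivial sub-case is where everyone lies in $\Aflip$ (otherwise $|x_p-x_q|\le f$ combined with $|\Akeep|=0$ forces every player into $\Adec$, terminating the iteration). So every $x_p=0$, and I claim $V=\emptyset$: if some $q\in V$ then $q$ broadcasts $v^*$ in Line~\ref{line:Bracha3}, validates its own message, and obtains $x_q\ge 1$, contradicting $q\in\Aflip$. With $V=\emptyset$, no player can validate writing $v^*$ in Stage~1, so $\bias=0$. For the case $|\Akeep|\in[1,f]$, $\emptyset\ne\Akeep\subseteq V$ makes $v^*$ well-defined, and the bounds are immediate: $v^*\cdot\writeval_p\in\{0,1\}$ yields $v^*\cdot\bias\in[0,nm_0]$. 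Adversarial selection follows from scheduling latitude: for each $p$, the adversary chooses whether the first $n-f$ messages $p$ validates meet $V$ (forcing $\writeval_p=v^*$) or miss $V$ (allowing $\writeval_p=0$; possible because $|V|$ can be as small as $|\Akeep|\le f$), and it can additionally stop any column at any prefix length, giving per-column control over the contribution to $v^*\cdot\bias$.

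The main obstacle I expect is the containment $\Akeep\subseteq V$ in the first case; this depends on $v_p$ being carried cleanly from the end of \BrachaAgreement{} into \CoinFlip{} via Line~\ref{line:fplus1} and requires care if the two routines are not sequenced as in the basic algorithm. The other cases are essentially structural consequences of the validation restriction $\writeval_p\in\{v^*,0\}$ together with \autoref{thm:blackboard}.
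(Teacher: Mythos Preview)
Your proof follows the same approach as the paper's (pigeonhole for $|\Akeep|\ge f+1$, direct argument that every $\writeval_p=0$ when $|\Akeep|=0$), with considerably more detail; the paper only treats the first two cases and each in a single sentence. One minor slip: in the first case you write $n-|V|\le f$, but from $|V|\ge f+1$ you only get $n-|V|\le n-f-1$; this is still strictly less than $n-f=|S_p|$, so the pigeonhole conclusion $S_p\cap V\ne\emptyset$ stands.
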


\begin{proof}
If $|\Akeep|\geq f+1$ then
for every player $p$, there exists a $q\in S_p$
with $q\in \Akeep$ and $v_q=v^*$,
hence $\writeval_p=v^*$.
By \autoref{thm:blackboard}, the number
of values written to 
$\BB_{2t-1}$ is at least
$(n-f)m_0$.
and hence 
$v^*\cdot \bias \geq (n-f)m_0 > \frac{2}{3}n\sqrt{m\cdot c\ln n}$.

If $|\Akeep| = 0$ then 
every player will set $\writeval_p = 0$
hence $\bias=0$.
\end{proof}

In the second stage of \CoinFlip, 
the players will populate $\BB_{2t}$ with coin flips
in $\{-1,1\}$.  
Redefine $X_i(t)$ be the sum of 
all non-$\bot$ entries in column $\BB_{2t}(\cdot, i)$.
By Chernoff bounds, if $i$ is uncorrupted then $|X_i(t)| \leq \XMAX$ 
with high probability, where
\[
\XMAX = m_0 = \sqrt{m\cdot c\ln n}.
\]
We will force $|X_i(t)| \leq \XMAX$ 
to hold with probability 1 by 
rounding $X_i(t)$ to $\pm \XMAX$ if it lies outside $[-\XMAX,\XMAX]$.

\begin{lemma}\label{lem:Akeep-large}
If $|\Akeep|\geq f+1$, the output of $\CoinFlip$ will be $v^*$ for all players, with high probability.
\end{lemma}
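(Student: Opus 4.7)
The plan is to show that, with high probability over the good players' coin flips, $v^* \cdot (\bias^{(p)} + \Sigma^{(p)}) > 0$ simultaneously for every player $p$. This immediately gives $\sgn(\bias^{(p)} + \Sigma^{(p)}) = v^*$ for all $p$, so everyone outputs $v^*$. The guiding idea is that the deterministic stage-1 bias of magnitude at least $(n-f)\, m_0$ in the $v^*$ direction (guaranteed by \autoref{lem:bias-vs-Akeep}) dominates every other contribution by a comfortable absolute factor, leaving no chance for the outcome to flip.

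First I would handle the view discrepancy. By \autoref{thm:blackboard}(\ref{thm:blackboard-ambiguous}), the views $\BB^{(p,2t)}$ and $\BB$ differ in at most $f$ cells across $\BB_{2t-1}$ and $\BB_{2t}$ combined. Each such cell contributes at most $1$ in absolute value to the discrepancy, since cells in $\BB_{2t-1}$ lie in $\{-1,0,1\}$ and cells in $\BB_{2t}$ carry value $\pm w_q$ with $w_q \le 1$. Hence $|(\bias + \Sigma) - (\bias^{(p)} + \Sigma^{(p)})| \le f$, uniformly for all $p$.

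Next I would bound $|\Sigma|$ by decomposing $\Sigma = \Sigma_G + \Sigma_B$. For the bad players, the enforced clamp $|X_q(t)| \le \XMAX = m_0$ together with $w_q \le 1$ give the deterministic bound $|\Sigma_B| \le f\,m_0$. For the good players, $\Sigma_G$ is a sum of at most $(n-f)m$ independent random signs, each scaled by $w_q\le 1$, where the adversary's asynchronous scheduling selects an adaptive stopping time at which $\BB_{2t}$ is declared complete. Since each good flip is oblivious to the future, applying Hoeffding/Azuma at the stopping time yields
\[
|\Sigma_G| \;\le\; \sqrt{2\,nm\,c\ln n} \;=\; \sqrt{2n}\cdot m_0
\]
with probability $1 - n^{-\Omega(c)}$. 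A union bound over all $n$ players loses only a factor of $n$ in the failure probability.

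Putting the three bounds together, with $n = (3+\epsilon)f$,
\[
v^*\bigl(\bias^{(p)} + \Sigma^{(p)}\bigr) \;\ge\; (n-f)\,m_0 - f\,m_0 - \sqrt{2n}\,m_0 - f \;=\; \Bigl((1+\epsilon)f - \sqrt{2(3+\epsilon)f}\Bigr) m_0 - f.
\]
For $f$ larger than a small absolute constant the coefficient of $m_0$ is $\Theta((1+\epsilon)f) = \Theta(f)$ (robust even as $\epsilon\to 1/f$), so the expression is strictly positive once $m_0 = \omega(1)$, which is automatic from $m \gg n$. The one step I would write out most carefully — and the main technical obstacle — is the martingale concentration for $\Sigma_G$ against an adaptive scheduler; the rest of the argument enjoys a comfortable factor-of-two headroom between $(n-f)$ and $f$ in the bias-versus-bad-noise comparison, so the arithmetic is insensitive to how close $\epsilon$ is to zero.
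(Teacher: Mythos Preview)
Your proposal is correct and follows essentially the same approach as the paper: both decompose the quantity as bias plus good sum $\Sigma_G$ plus bad sum $\Sigma_B$, bound $|\Sigma_G|=\tilde O(\sqrt{mn})$ via Azuma against an adaptive stopping time, bound $|\Sigma_B|\le f\,\XMAX$ via the clamp, and absorb the at-most-$f$ view discrepancy, leaving $\Theta(n)\cdot m_0$ of slack from the bias. One small slip: the number of good flips is at most $nm$, not $(n-f)m$, but your Azuma bound already uses $nm$ so nothing changes; also the union bound over players is unnecessary since $\Sigma_G$ is a single global quantity and the view discrepancy is deterministic.
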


\begin{proof}
By \autoref{lem:bias-vs-Akeep}, 
$|\bias|\geq (n-f)m_0 
> \frac{2}{3}n\sqrt{m\cdot c\ln n}$.
The number of good coin flips in $\BB_{2t}$ 
is between $m(n-2f)$ and $mn$, which we model
as a martingale with an optional stopping time
controlled by the adversary. 
By Azuma's inequality, the sum
of all $\Theta(mn)$ good coin flips
is $\sqrt{mn\cdot c\ln n}$ in 
absolute value, with high probability.
Due to the $\XMAX$ ceiling, the contribution
of corrupt players to the sum 
is at most 
$f\XMAX < \frac{1}{3}n\sqrt{m\cdot c\ln n}$ in absolute value.
Since 
$\frac{1}{3}n\sqrt{mc\ln n} + \tilde{O}(\sqrt{mn}) + f 
< \frac{2}{3}n\sqrt{mc\ln n}$, 
the contribution of corrupt and non-corrupt players 
will be much smaller than $\bias$, with high probability,
and $\sgn(\bias^{(p)} + \Sigma^{(p)}) = v^*$ for all $p$.
\end{proof}

\begin{lemma}\label{lem:Akeep-small}
If $|\Akeep| \leq f$,
and for some player $p$, 
$\bias^{(p)} + \Sigma^{(p)} \not\in [-f,f]$,
all players will \textbf{decide} 
the value $\sgn(\bias^{(p)} + \Sigma^{(p)})$
by the next iteration of
\BrachaAgreement.
\end{lemma}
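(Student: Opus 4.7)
The plan is to combine the iterated-blackboard consistency guarantee with Bracha's cascading supermajority argument, in two steps: first, show that every player computes the same \CoinFlip{} output $\sigma := \sgn(\bias^{(p)} + \Sigma^{(p)})$; second, conclude that every player decides $\sigma$ in the next iteration of \BrachaAgreement.

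For the first step I invoke \autoref{thm:blackboard} together with the ``$p$-versus-$q$'' agreement bound discussed in \autoref{sect:prelim}: any two players' views $\BB^{(p,2t)}$ and $\BB^{(q,2t)}$ differ in at most $f$ cells in total across $\BB_{2t-1}$ and $\BB_{2t}$, and each such differing cell is a ``last'' cell in a partial column. Since each entry of $\BB_{2t-1}$ lies in $\{-1,0,1\}$ and each weighted entry of $\BB_{2t}$ lies in $[-1,1]$, a single differing cell perturbs $\bias + \Sigma$ by at most $1$ in absolute value, so $|(\bias^{(p)} + \Sigma^{(p)}) - (\bias^{(q)} + \Sigma^{(q)})| \leq f$ for every pair $p,q$. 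Combining this with the hypothesis $|\bias^{(p)} + \Sigma^{(p)}| > f$ forces $\sgn(\bias^{(q)} + \Sigma^{(q)}) = \sigma$ for every player $q$. Consequently every player in $\Aflip$ outputs $\sigma$, and validation prevents any player from later broadcasting $-\sigma$ as a purported \CoinFlip{} output.

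For the second step, the hypothesis $|\Akeep| \leq f$ together with the Bracha invariant $|x_p - x_q| \leq f$ forces $\Adec = \emptyset$ whenever $\Aflip \neq \emptyset$ (the only case in which \CoinFlip{} is invoked at all), so the $n - f$ good participating players split as $\Akeep \cup \Aflip$ with $|\Aflip| \geq n - 2f = f+1$. Going into the next iteration, $\Aflip$-players carry $v_p = \sigma$ and $\Akeep$-players carry $v_p = v^*$. I leverage validation constraints on Byzantine broadcasts in the next iteration's Line~\ref{line:Bracha1} --- each such broadcast must be justified either by $\Akeep$-membership in the prior iteration or by the now-unique valid \CoinFlip{} output $\sigma$ --- combined with \autoref{lem:bias-vs-Akeep} (which gives $v^* \cdot \bias^\star \geq 0$) to argue that all $n - f$ good players carry $\sigma$ into the next iteration, meeting the supermajority threshold $(n+f+1)/2 = n - f$ at $n = 3f + 1$. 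Bracha's standard cascade through Lines~\ref{line:Bracha1}--\ref{line:Bracha3} then forces every player to \textbf{decide} $\sigma$.

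The main obstacle is the subcase $\Akeep \neq \emptyset$ with $\sigma = -v^*$, where naively the adversary could split its Byzantine between ``$\Akeep$-committers'' (broadcasting $v^*$) and ``$\Aflip$-committers'' (broadcasting $\sigma$) to delay supermajority. To close this gap I plan a magnitude analysis analogous to \autoref{lem:Akeep-large}, using the $\XMAX = m_0$ cap on each corrupt column's contribution to $\Sigma$ (bounding the total corrupt $\Sigma$-contribution by $f m_0$), Azuma's inequality on the good-player $\Sigma$-contribution (which is $\tilde{O}(\sqrt{mn})$ with high probability), and the non-negativity $v^* \cdot \bias^\star \geq 0$ from \autoref{lem:bias-vs-Akeep}, to show that $v^*(\bias^{(p)} + \Sigma^{(p)}) < -f$ is incompatible with the hypothesis when $|\Akeep| \geq 1$, thereby pinning $\sigma = v^*$ and reducing the argument to the clean ``all good players carry $\sigma$'' case.
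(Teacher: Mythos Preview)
Your first step is correct and matches the paper exactly: the $f$-cell disagreement bound from \autoref{thm:blackboard}, together with each (weighted) cell having absolute value at most $1$, forces $\sgn(\bias^{(q)}+\Sigma^{(q)})=\sigma$ for every $q$.

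Your second step takes a detour that is both unnecessary and, in the subcase you single out, incorrect. The paper's argument is a one-liner: since $\Adec=\emptyset$ (the usual Bracha dichotomy), the sets $\Akeep$ and $\Aflip$ partition \emph{all} $n$ players, so $|\Aflip|=n-|\Akeep|\geq n-f$. Every player in $\Aflip$---good or bad---must output $\sigma$ (validation enforces this for bad players too, since $\sigma$ is the unique value consistent with any view of $\BB_{2t-1},\BB_{2t}$). Hence $n-f\geq(n+f+1)/2$ players carry $\sigma$ into the next iteration, and Bracha's supermajority cascade finishes the job. The $\leq f$ players in $\Akeep$ carrying $v^*$ (possibly $\neq\sigma$) are simply irrelevant; there is no need to reconcile $\sigma$ with $v^*$.

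You instead count only good players in $\Aflip$, obtaining the weaker $|\Aflip\cap G|\geq n-2f$, and then try to rule out the subcase $\sigma=-v^*$ via a magnitude argument. That argument fails: when $|\Akeep|\in[1,f]$, \autoref{lem:bias-vs-Akeep} says the adversary can choose $v^*\cdot\bias$ anywhere in $[0,nm_0]$, in particular $\bias\approx 0$. The adversary can then set each corrupt column sum to $-v^*\XMAX$, driving $v^*\cdot\Sigma_B\approx -f m_0$, while $|v^*\cdot\Sigma_G|=\tilde{O}(\sqrt{mn})$ w.h.p. Since $f m_0=f\sqrt{mc\ln n}\gg f$, we get $v^*(\bias^{(p)}+\Sigma^{(p)})\ll -f$, so $\sigma=-v^*$ with $|\bias^{(p)}+\Sigma^{(p)}|>f$ is entirely consistent with the hypothesis. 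The fix is not to patch the magnitude argument but to drop it: just count $\Aflip$ over all $n$ players.
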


\begin{proof}
By \autoref{thm:blackboard}, two players
$p$ and $q$ disagree in at most $f$ locations
in $\BB_{2t-1}$ and $\BB_{2t}$.
Since the absolute value of any cell 
in either matrix is at most 1, 
if $\bias^{(p)} + \Sigma^{(p)} \not\in [-f,f]$
then for any $p,q$, $\sgn(\bias^{(p)} + \Sigma^{(p)})
= 
\sgn(\bias^{(q)} + \Sigma^{(q)})$.
Thus, at the end of this iteration of \BrachaAgreement,
at least $|\Aflip|\geq n-f \ge (n+f+1)/2$ 
players will hold a supermajority and reach agreement
in the next iteration of \BrachaAgreement.
\end{proof}

\subsection{Negative Correlations}\label{sect:neg-correlations}

Hereafter $t\in [T]$ refers to the iteration index 
within the current epoch $k$, 
and $(w_i)=(w_{i,k})$ 
is the current weight vector used in epoch $k$.  
In each epoch we track 
weighted pairwise correlations, defined as:
\[
\CORR(i,j)  
    = \ang{w_i X_i, w_j X_j}
    = w_iw_j \sum_{t\in [T]} X_i(t) X_j(t).
\]

Recall that $n=(3+\epsilon)f$, 
$\epsilon \in [1/f,1/2]$,
and $G,B$ are the good and bad players.
\begin{lemma}\label{lem:n/3-gap-lemma}
Suppose the weight vector $(w_i)_{i\in [n]}$ used in an epoch
satisfies \autoref{inv:weights}, but \BrachaAgreement{} fails to 
reach agreement within the epoch.  
Let $m=\Omega(n\ln n/\epsilon^4)$ and $T=\Omega(n^2\ln^3 n/\epsilon^4)$.
Then, with high probability,
\begin{enumerate}[itemsep=0pt]

    \item Every pair of distinct $i,j \in G$ has $-\CORR(i,j) \leq w_iw_j\beta$, where $\beta = m\sqrt{T(c\ln n)^3}$.
    \item If the adversary does not corrupt any new players 
    during the epoch, then 
    \[
        \sum_{(i,j) \in G \times B} \max \{ 0, -\CORR(i,j) - w_iw_j\beta \} \geq \frac1{8}\epsilon^2fmT.
    \]
\end{enumerate}
\end{lemma}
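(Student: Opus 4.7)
For distinct $i,j\in G$, conditional on the scheduling, $X_i(t)$ and $X_j(t)$ are independent martingale-stopped sums of fair $\pm 1$ flips, each clamped to $[-\XMAX,\XMAX]$. Hence every $X_i(t)X_j(t)$ has conditional mean $0$ and magnitude at most $\XMAX^2=mc\ln n$, which makes $\bigl(\sum_{s\le t} X_i(s)X_j(s)\bigr)_t$ a bounded-difference martingale. Azuma--Hoeffding applied over $T$ steps, together with a union bound over the at most $\binom{n}{2}$ good pairs (absorbed into the failure probability $n^{-\Omega(c)}$), gives $\bigl|\sum_t X_i(t)X_j(t)\bigr|\le \XMAX^2\sqrt{T c\ln n}=m\sqrt{T(c\ln n)^3}=\beta$ with high probability, so $|\CORR(i,j)|\le w_iw_j\beta$.

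\paragraph{Reduction for Part 2.}
Because $\max\{0,x\}\ge x$ and $\sum_{(i,j)\in G\times B} w_iw_j \le nf$,
\[
\sum_{(i,j)\in G\times B}\max\{0,-\CORR(i,j)-w_iw_j\beta\} \;\ge\; -\sum_t\Sigma_G(t)\Sigma_B(t)\;-\;\beta\cdot nf,
\]
and $\beta nf = nfm\sqrt{T(c\ln n)^3}=o(T\epsilon^2 fm)$ at $T=\Omega(n^2\ln^3 n/\epsilon^4)$. So it suffices to prove $-\sum_t\Sigma_G(t)\Sigma_B(t)\ge \Omega(T\epsilon^2 fm)$ w.h.p. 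The no-agreement hypothesis combined with \autoref{lem:Akeep-large} applied per iteration forces $|\Akeep(t)|\le f$, and \autoref{lem:Akeep-small} applied to every player $p$ forces $|\bias^{(p)}+\Sigma^{(p)}|\le f$; since two players' views of $\bias+\Sigma$ differ by $O(f)$, a single ``true'' identity $\Sigma_B(t) = -\bias(t)-\Sigma_G(t)+r(t)$ holds with $|r(t)|=O(f)$, and
\[
-\sum_t\Sigma_G(t)\Sigma_B(t) \;=\; \sum_t\Sigma_G(t)^2 \;+\; \sum_t\Sigma_G(t)\,\bias(t) \;-\; \sum_t\Sigma_G(t)\,r(t).
\]

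\paragraph{Main term and residual.}
\autoref{inv:weights} gives $\sum_{i\in G} w_i\ge (1+\epsilon-\epsilon^4)f$; after removing the at most $f$ delayed columns, the participating good players carry weight $\ge (\epsilon-\epsilon^4)f$, and Cauchy--Schwarz applied to a set of size $\le n$ then gives $\sum w_i^2 = \Omega(\epsilon^2 f)$ over this set. Hence $\E[\Sigma_G(t)^2\mid \mathcal{H}_t]=\Omega(\epsilon^2 fm)$ where $\mathcal{H}_t$ denotes the history just before Stage~2 of iteration $t$, and Freedman's martingale concentration inequality yields $\sum_t\Sigma_G(t)^2\ge \Omega(T\epsilon^2 fm)$ w.h.p. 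The residual satisfies $|\sum_t\Sigma_G(t)\,r(t)|\le O(f)\sqrt{T\sum_t\Sigma_G(t)^2}=O(fT\sqrt{nm})$, which is $o(T\epsilon^2 fm)$ once $m=\Omega(n\ln n/\epsilon^4)$.

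\paragraph{Main obstacle.}
The delicate step is controlling $\sum_t\Sigma_G(t)\,\bias(t)$, since $|\bias(t)|$ may be as large as $nm_0=n\sqrt{mc\ln n}$. I would decompose $\bias(t)=v^*(t)(b_0(t)+\Delta(t))$, where $b_0(t)$ is measurable at the start of Stage~2 and $\Delta(t)\in[0,fm_0]$ captures post-hoc adversarial reveals of partial Stage-1 columns. Since $v^*(t)$ is committed before the Stage-2 coins and $\E[\Sigma_G(t)\mid\mathcal{H}_t]=0$, the $b_0$-piece $\sum_t v^*b_0\Sigma_G$ is a martingale whose quadratic variation is at most $T(nm_0)^2 nm$, and Freedman bounds it by $\tilde O(nm\sqrt{Tn})=o(T\epsilon^2 fm)$ precisely at $T=\Omega(n^2\ln^3 n/\epsilon^4)$. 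The adaptive $\Delta$-piece must be handled via the sign constraint $v^*\bias\ge 0$ and the following trade-off: if the adversary uses $b_0$ aggressively to enable overcancellation of $\Sigma_G$ in ``case B'' iterations (where $v^*(t)\Sigma_G(t)<0$), then by symmetry an ``case A'' iteration ($v^*(t)\Sigma_G(t)\ge 0$) occurs with conditional probability $\approx 1/2$, and in such an iteration the inequality $b+|v^*\Sigma_G|\le fm_0$ required by \autoref{lem:Akeep-small} is violated---forcing agreement and contradicting the no-agreement conditioning across an epoch of length $T$. Hence conditionally on no agreement, $b_0(t)$ is small enough in most iterations that the $\Delta$-contribution is dominated by the main term $\sum_t\Sigma_G(t)^2$, yielding $-\sum_t\Sigma_G\Sigma_B\ge \Omega(T\epsilon^2 fm)$ and, after subtracting the $\beta nf$ slack, the claimed $\tfrac18\epsilon^2 fmT$ bound.
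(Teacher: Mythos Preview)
Your Part 1 and the reduction/main-term analysis for Part 2 are essentially the paper's argument: the same martingale for good--good correlations, the same $\max\{0,x\}\ge x$ reduction, and the same Jensen/Cauchy--Schwarz bound yielding $\E[\Sigma_G(t)^2]\ge \Omega(\epsilon^2 fm)$ under \autoref{inv:weights}.

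The genuine gap is in your ``Main obstacle'' paragraph. You estimate $\Delta(t)\in[0,fm_0]$, reasoning that the adversary can post-hoc reveal up to $f$ partial Stage-1 \emph{columns} of height $m_0$. But the \IteratedBlackboard{} guarantee (\autoref{thm:blackboard}, Part~\ref{thm:blackboard-ambiguous}) is much stronger: any two views of $\BB_{2t-1}$ differ in at most $f$ \emph{cells}, because only the single last-write position of each partial column is ambiguous. Hence if $b_0(t)$ is the bias seen by the first good player $\hat p$ to finish Stage~1 (the paper calls this $\biasbar(t)$), then $|\bias^{(q)}(t)-b_0(t)|\le f$ for every $q$, so $|\Delta(t)|\le f$, not $fm_0$. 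This extra $f$ is absorbed into the $2f|\Sigma_G(t)|$ residual you already control; the cross term $\sum_t \biasbar(t)\Sigma_G(t)$ is then a genuine martingale (since $\biasbar(t)$ is fixed before any good Stage-2 coin) with increments $\le nm_0\cdot\tilde O(\sqrt{mn})$, and Azuma gives the $\tilde O(mn\sqrt{nT})$ bound you need. Your entire ``trade-off'' argument for the $\Delta$-piece is therefore unnecessary.

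It is worth noting that your trade-off argument, as written, does not work anyway. The constraint from \autoref{lem:Akeep-small} is $|\bias+\Sigma_G+\Sigma_B|\le O(f)$, not $b+|v^*\Sigma_G|\le fm_0$; since the adversary has $|\Sigma_B|$ up to $f\XMAX=fm_0$ at its disposal, it can satisfy the true constraint in both your ``case A'' and ``case B'' regardless of how large $b_0$ is, so no contradiction with the no-agreement hypothesis arises. With $|\Delta(t)|$ genuinely as large as $fm_0$, the adversary could set $\Delta(t)$ adaptively to make $\sum_t v^*(t)\Sigma_G(t)\Delta(t)$ of order $-fm_0\cdot T\cdot\tilde\Theta(\sqrt{mn})$, which swamps the $\epsilon^2 fmT$ main term. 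So the correct bound on $\Delta(t)$ is not just a simplification---it is the missing idea.
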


\begin{proof} [Proof of Part 1.]
Fix an iteration $t \in [T]$.
If $i\in G$, let $\delta_{t,r}^i \in\{-1,0,1\}$ be 
the outcome of $i$'s $r$th coin flip in iteration $t$, 
where 0 indicates the coin was never flipped.
For any $r, s \geq 1$ and $i,j\in G$,
$\E[\delta_{t,r}^i\delta_{t,s}^j] = 0$ since each 
of $\delta_{t,r}^i,\delta_{t,s}^j$ is either 0 
or a fair coin flip independent of the other.
By linearity of expectation this implies $\E[X_i(t)X_j(t)]=0$ as well.

Consider the martingale 
$(S_t)_{t\in[0,T]}$, 
where $S_0 = 0$ and $S_t = S_{t-1} + X_i(t)X_j(t)$.
For any $t$, $|S_t - S_{t-1}| \leq \XMAX^2$.
By Azuma's inequality, 
$|S_T| \leq \XMAX^2\sqrt{T\cdot c\ln n}
= m\sqrt{T(c\ln n)^3}$ with high probability.
Therefore, by a union bound, 
for every pair of distinct 
$i,j\in G$,
$-\CORR(i,j) \leq w_i w_j m\sqrt{T(c\ln n)^3} = w_iw_j\beta$ 
with high probability.
\end{proof}

Part 2 of 
\autoref{lem:n/3-gap-lemma} 
will be proved following 
Lemmas~\ref{lem:bias-properties}--\ref{lem:biasSG}.
It only applies to epochs in which
the adversary corrupts no one, so we shall assume that $G,B$ are stable throughout the epoch.

In the \CoinFlip{} algorithm, 
the construction of $\BB_{2t-1}$ \emph{logically} precedes the 
construction of $\BB_{2t}$, 
but because of asynchrony 
some of the contents of $\BB_{2t-1}$ 
may actually depend on the 
coin flips written to $\BB_{2t}$.\footnote{The construction of $\BB_{2t}$ can proceed as soon as $n-f$ players are finished
with $\BB_{2t-1}$.  Thus, a group of $f$ slow and corrupt players can 
choose whether to perform their last write in $\BB_{2t-1}$ based on the 
contents of $\BB_{2t}$.}
We eliminate these mild 
dependencies as follows.  
Suppose that $\hat{p}$ is the \emph{first} player in $G$ 
to fix its historical view $\BB^{(\hat{p},2t-1)}$.
At this moment, 
define $\biasbar(t)$ as
\begin{align*}
\biasbar(t) &= \sum_{j,q} \BB_{2t-1}^{(\hat{p},2t-1)}(j,q)
& \mbox{(Treating $\bot$ as 0)}
\end{align*}
Write $\Sigma(t) = \Sigma_G(t) + \Sigma_B(t)$, 
where $\Sigma_G(t)$ and $\Sigma_B(t)$ are the sum of 
coin flips of the good and bad players, respectively. 

\begin{lemma}
\label{lem:bias-properties}
In any iteration $t$,
\begin{enumerate}
    \item For any player $q$, $|\biasbar(t) - \bias^{(q)}(t)| \leq f$.
    \item $\E[\biasbar(t)\Sigma_G(t)]=0$.
    \item If \BrachaAgreement{} does not terminate by iteration $t+1$, then 
\[
    -\Sigma_G(t)\Sigma_B(t) 
    \geq \Sigma_G(t)^2 + \biasbar(t)\Sigma_G(t) - 2f|\Sigma_G(t)|.
\]
\end{enumerate}
\end{lemma}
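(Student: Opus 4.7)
My plan is to prove the three parts in order, using the structural guarantees of the iterated blackboard protocol (\autoref{thm:blackboard}) and a martingale argument built around the precise moment $\hat{p}$ fixes its view.

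For Part~1, the key observation is that $\hat{p}$, being the first good player to fix $\BB^{(\hat{p}, 2t-1)}$, must write to $\BB_{2t}$ as soon as it transitions to Stage~2. Consequently, by \autoref{thm:blackboard} part~3, any other player $q$ who fixes $\BB^{(q, 2t)}$ is aware of $\hat{p}$'s view $\BB^{(\hat{p}, 2t-1)}$. I would use this to argue that $\BB_{2t-1}^{(q, 2t)}$ is a monotone refinement of $\BB_{2t-1}^{(\hat{p}, 2t-1)}$: every non-$\bot$ entry in $\hat{p}$'s view is present with the same value in $q$'s view, and $q$ may only have additional non-$\bot$ entries filled in via retroactive updates. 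Each such extra entry lies in $\{-1, 0, 1\}$, so it contributes at most $1$ in absolute value to $|\biasbar(t) - \bias^{(q)}(t)|$. By \autoref{thm:blackboard} part~2 applied to $\BB^{(\hat{p}, 2t-1)}$, the number of cells where $\hat{p}$'s view holds $\bot$ but the truth is non-$\bot$ is at most $f$, bounding the discrepancy by $f$.

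For Part~2, I would exploit the timing of $\hat{p}$'s view fixation. Let $T_0$ denote the instant at which $\hat{p}$ fixes $\BB^{(\hat{p}, 2t-1)}$; then $\biasbar(t)$ is measurable with respect to the filtration $\mathcal{F}_{T_0}$. Since every good player only begins writing coin flips to $\BB_{2t}$ after fixing its own view of $\BB_{2t-1}$, and $\hat{p}$ is the first good player to do so, no good coin flip has yet been committed to $\BB_{2t}$ at time $T_0$. Hence $\Sigma_G(t)$ is a weighted sum of fair $\{-1, 1\}$ coin flips drawn strictly after $T_0$, each independent in value of $\mathcal{F}_{T_0}$ (even though the adversary adaptively controls the number of flips and their ordering). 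A standard martingale argument gives $\E[\Sigma_G(t) \mid \mathcal{F}_{T_0}] = 0$, so $\E[\biasbar(t)\Sigma_G(t)] = \E[\biasbar(t)\cdot\E[\Sigma_G(t) \mid \mathcal{F}_{T_0}]] = 0$.

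For Part~3, I would invoke the contrapositives of \autoref{lem:Akeep-large} and \autoref{lem:Akeep-small}: if \BrachaAgreement{} does not terminate by iteration $t+1$, then $|\Akeep| \leq f$, and for every player $p$, $\bias^{(p)}(t) + \Sigma^{(p)}(t) \in [-f, f]$. Picking any $p$, Part~1 together with the weighted view bound on $\BB_{2t}$ (again \autoref{thm:blackboard} part~2) lets me conclude $|\biasbar(t) + \Sigma_G(t) + \Sigma_B(t)| \leq 2f$. Multiplying by $-\Sigma_G(t)$ and rearranging yields
\[
-\Sigma_G(t)\Sigma_B(t) \;\geq\; \Sigma_G(t)^2 + \biasbar(t)\Sigma_G(t) - 2f|\Sigma_G(t)|,
\]
as required. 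The most delicate step will be Part~1: obtaining the tight bound of $f$ (rather than a naive $2f$) on $|\biasbar(t) - \bias^{(q)}(t)|$ hinges on arguing that the mismatches between $\hat{p}$'s and $q$'s views are one-sided, which in turn requires a careful reading of property~3 of \autoref{thm:blackboard} together with the extremal definition of $\hat{p}$.
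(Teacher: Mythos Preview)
Your Parts~1 and~2 follow essentially the same route as the paper. For Part~1 you are more explicit than the paper (which simply cites \autoref{thm:blackboard} and asserts the $f$-cell bound), attempting to justify the one-sidedness of the discrepancy via part~3 of \autoref{thm:blackboard}; this is reasonable, though your final step---bounding the number of extra cells in $q$'s view by applying part~2 to $\hat p$'s view at time $2t-1$---does not obviously account for cells that were written to $\BB_{2t-1}$ \emph{after} $\hat p$ fixed its view but before $q$ did. The paper's proof is terse on exactly this point as well.

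Part~3, however, has a genuine quantitative gap. From ``$\bias^{(p)}(t)+\Sigma^{(p)}(t)\in[-f,f]$ for every $p$'' together with Part~1 ($|\biasbar-\bias^{(p)}|\le f$) and the analogous bound $|\Sigma-\Sigma^{(p)}|\le f$ on $\BB_{2t}$, the triangle inequality only gives
\[
|\biasbar(t)+\Sigma_G(t)+\Sigma_B(t)|\le f + f + f = 3f,
\]
not the $2f$ you claim. The paper gets the sharper constant by a different observation: if \BrachaAgreement{} does not terminate by iteration $t+1$, then not only is every $\bias^{(p)}+\Sigma^{(p)}$ in $[-f,f]$, but in fact there exist \emph{two} players $p,q$ with $\sgn(\bias^{(p)}+\Sigma^{(p)})\neq\sgn(\bias^{(q)}+\Sigma^{(q)})$ (otherwise all of $\Aflip$, of size at least $n-f$, would adopt the same coin value and form a supermajority). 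Using the player with nonnegative sign yields $\biasbar+\Sigma\ge 0-2f$, and the player with negative sign yields $\biasbar+\Sigma\le 0+2f$. This two-sided sign argument is what produces the $2f$ in the lemma as stated; your single-player approach cannot recover it.
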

\begin{proof}
\emph{Part 1.}
By \autoref{thm:blackboard}, 
$\BB_{2t-1}^{(\hat{p},2t-1)}$
and 
$\BB_{2t-1}^{(q,t')}$ disagree in at most $f$ cells, 
for any $q\in [n]$ 
and $t'\geq 2t-1$, 
hence $|\biasbar(t) - \bias^{(q)}(t)| \leq f$.
\emph{Part 2.}
By definition, $\biasbar(t)$ is fixed before any good players have written anything to $\BB_{2t}$.  Thus $\E[\biasbar(t)\Sigma_G(t)]=\biasbar(t)\cdot \E[\Sigma_G(t)]=0$.
\emph{Part 3.}
By \autoref{lem:Akeep-small}, 
if the adversary avoids termination then
$\sgn(\bias^{(p)}(t) + \Sigma^{(p)}(t)) \neq \sgn(\bias^{(q)}(t) + \Sigma^{(q)}(t))$ for two players $p,q$.
Since $|\biasbar(t) - \bias^{(p)}(t)|\leq f$
and $|\Sigma(t)-\Sigma^{(p)}(t)|\leq f$,
it follows from $\Sigma(t)=\Sigma_G(t) + \Sigma_B(t)$ that
\begin{align}
-2f &\leq \biasbar(t) + \Sigma_G(t) + \Sigma_B(t) \leq 2f.\nonumber
\intertext{Rearranging terms, we have both}
-\Sigma_B(t) &\geq -2f + \biasbar(t) + \Sigma_G(t)\nonumber
\intertext{and}
\Sigma_B(t) &\geq -2f - \biasbar(t) - \Sigma_G(t).\nonumber
\intertext{Depending on $\sgn(\Sigma_G(t))$,
we multiply the first inequality 
by $\Sigma_G(t)\geq 0$ or the second by $-\Sigma_G(t)\geq 0$,
which implies the following.}
-\Sigma_G(t)\Sigma_B(t) &\geq \Sigma_G(t)^2 + \biasbar(t)\Sigma_G(t) - 2f|\Sigma_G(t)|.\label{eqn:GB-product}
\end{align}
\end{proof}

Since $\Var(|\Sigma_G(t)|) = \E[\Sigma_G(t)^2]-\E[|\Sigma_G(t)|]^2\geq 0$, 
$\E[|\Sigma_G(t)|]\leq \sqrt{\E[\Sigma_G(t)^2]}$.
Lemmas~\ref{lem:weight-bound}--\ref{lem:biasSG} analyze
the terms of (\ref{eqn:GB-product}).

\begin{lemma}
\label{lem:weight-bound}
For any $\hat{G} \subseteq G$ with $|\hat{G}| = n-2f = (1+\epsilon)f$, 
$\sum_{i \in \hat{G}} w_i^2 \ge \frac12\epsilon^2f$.
\end{lemma}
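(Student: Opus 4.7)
The plan is to combine the weight invariant with Cauchy--Schwarz. The bound asks us to show that, no matter how we pick $(1+\epsilon)f$ good players, their weights cannot all be concentrated too close to zero. Intuitively this is true because \autoref{inv:weights}, together with the trivial bound $w_i \in [0,1]$ on Byzantine players, limits the total weight the adversary can strip from the good population.

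First I would bound the total good weight loss. Since $1 - w_i \leq 1$ for every $i$ and $|B| \leq f$, we have $\sum_{i \in B}(1 - w_i) \leq f$, so \autoref{inv:weights} gives
\[
\sum_{i \in G}(1 - w_i) \;\leq\; f + \epsilon^4 f \;=\; (1 + \epsilon^4)f.
\]
This is the only consequence of the invariant we need. Next, for \emph{any} $\hat G \subseteq G$ with $|\hat G| = (1+\epsilon)f$,
\[
\sum_{i \in \hat G}(1-w_i) \;\leq\; \sum_{i \in G}(1-w_i) \;\leq\; (1+\epsilon^4)f,
\]
so
\[
\sum_{i \in \hat G} w_i \;\geq\; (1+\epsilon)f - (1+\epsilon^4)f \;=\; (\epsilon - \epsilon^4)f.
\]

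Now I convert this $\ell_1$ bound into an $\ell_2$ bound via Cauchy--Schwarz: for any non-negative $w_i$'s, $\sum_{i \in \hat G} w_i^2 \geq (\sum_{i \in \hat G} w_i)^2 / |\hat G|$. Substituting the two bounds above,
\[
\sum_{i \in \hat G} w_i^2 \;\geq\; \frac{(\epsilon-\epsilon^4)^2 f^2}{(1+\epsilon)f} \;=\; \frac{\epsilon^2(1-\epsilon^3)^2}{1+\epsilon}\cdot f.
\]
Finally, since $\epsilon \leq 1/2$, we have $(1-\epsilon^3)^2 \geq (7/8)^2 = 49/64$ and $1/(1+\epsilon) \geq 2/3$, so the coefficient on $\epsilon^2 f$ is at least $\frac{49}{64} \cdot \frac{2}{3} = \frac{49}{96} > \frac12$, giving $\sum_{i \in \hat G} w_i^2 \geq \frac12 \epsilon^2 f$.

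The proof is essentially mechanical once one observes the Cauchy--Schwarz reduction; there is no real obstacle. The only mild subtlety is checking that the constant slack in the invariant ($\epsilon^4 f$, rather than $0$) and the range restriction $\epsilon \leq 1/2$ are together compatible with the clean constant $1/2$ in the conclusion, which the arithmetic above confirms.
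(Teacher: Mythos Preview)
Your proof is correct and follows essentially the same approach as the paper: first use \autoref{inv:weights} together with $\sum_{i\in B}(1-w_i)\le f$ to get $\sum_{i\in\hat G} w_i \ge (\epsilon-\epsilon^4)f$, then convert to an $\ell_2$ bound via Cauchy--Schwarz (the paper phrases the same step as Jensen's inequality) and check the constants using $\epsilon\le 1/2$.
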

\begin{proof}
We compute:
\begin{align*}
    \sum_{i \in \hat{G}} w_i &= |\hat{G}| - \sum_{i \in \hat{G}} (1-w_i) 
    \geq |\hat{G}| - \sum_{i \in B} (1-w_i) - \epsilon^4 f \geq |\hat{G}| - \left(1+\epsilon^4\right) f 
    = \left(\epsilon-\epsilon^4\right)f 
    \geq \frac{7\epsilon}{8}f.
\end{align*}
The first inequality follows from \autoref{inv:weights} and the fact that the total weight deduction of $\hat{G}$ is at most that of $G$. The second inequality follows from $w_i \in [0,1]$, so the total weight deduction of $B$ is at most $f$. The equality follows from $|\hat{G}|=n-2f = (1+\epsilon)f$. Finally, the last inequality follows from the assumption that $\epsilon \leq 1/2$. 
Consequently: 
\begin{align*}
    \sum_{i \in \hat{G}} w_i^2 
    = |\hat{G}| \sum_{i \in \hat{G}} w_i^2\frac{1}{|\hat{G}|} \geq |\hat{G}|\left(\sum_{i \in \hat{G}} w_i\frac{1}{|\hat{G}|} \right)^2 
    \geq |\hat{G}|\left(\frac{7\epsilon}{8(1+\epsilon)}\right)^2  
    \geq \frac{1}{2}\epsilon^2 f \, ,
\end{align*}
where the first inequality is Jensen's inequality, 
the middle inequality is from above, 
and the last inequality follows from 
$|\hat{G}| = (1+\epsilon)f$
and the 
assumption $\epsilon \leq \frac{1}{2}$.
\end{proof}

\begin{lemma}[Cf.~\cite{HuangPZ22}]
\label{lem:expected-sum-squared}
No matter how the coin flips of $G$ 
are scheduled in iteration $t$, 
$\E[\Sigma_G(t)^2] \geq \frac{1}{2}\epsilon^2 mf$.
\end{lemma}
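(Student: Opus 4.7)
My plan is to prove the bound by packaging the adversary's scheduling of good-player coin writes as a predictable process and then invoking a standard $L^2$ optional-stopping identity, which reduces the problem to lower-bounding a weighted count of filled cells. That count will then be handled by \autoref{thm:blackboard}\ref{thm:blackboard-full-partial} together with the already-proved \autoref{lem:weight-bound}.

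Concretely, imagine the construction of $\BB_{2t}$ as an interleaved sequence of good-player writes (the Byzantine writes can be absorbed into the filtration since they do not affect $\Sigma_G(t)$). Let $\mathcal{F}_{k}$ be the $\sigma$-algebra of everything observable through step $k$ (all coin outcomes so far, all Byzantine writes, and all scheduling decisions), let $i_k \in G$ be the good player making the $k$-th good write, and let $\delta_k \in \{-1,+1\}$ be the outcome of that write. Crucially, $i_k$ is $\mathcal{F}_{k-1}$-measurable (the adversary chooses it before the coin is flipped) while $\delta_k$ is a fresh fair coin independent of $\mathcal{F}_{k-1}$. Let $\tau$ be the (bounded) total number of good writes. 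Setting $M_k = \sum_{s \le k} w_{i_s}\delta_s$, we obtain a bounded $L^2$ martingale whose conditional increment variance is exactly $w_{i_s}^2$, so the Doob identity yields
\[
\E[\Sigma_G(t)^2] \;=\; \E[M_\tau^2] \;=\; \E\!\left[\sum_{s=1}^\tau w_{i_s}^2\right] \;=\; \E\!\left[\sum_{i\in G} w_i^2\, N_i\right],
\]
where $N_i$ denotes the number of non-$\bot$ entries in column $i$ of $\BB_{2t}$.

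Next I would use \autoref{thm:blackboard}\ref{thm:blackboard-full-partial}: upon completion, $\BB_{2t}$ has at least $n-f$ full columns. Since $|B|\le f$, at least $(n-f)-f = n-2f = (1+\epsilon)f$ of the full columns belong to good players; call this set $G_{\mathrm{full}}$, and pick any $\hat G\subseteq G_{\mathrm{full}}$ with $|\hat G|=n-2f$. For every $i\in\hat G$, $N_i=m$, so surely
\[
\sum_{i\in G} w_i^2\, N_i \;\ge\; m\sum_{i\in \hat G} w_i^2 \;\ge\; \tfrac{1}{2}\epsilon^2 m f,
\]
where the final inequality is exactly \autoref{lem:weight-bound}. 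Taking expectations on both sides gives the claimed bound $\E[\Sigma_G(t)^2] \ge \tfrac{1}{2}\epsilon^2 m f$.

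The main subtle point I expect to have to handle is the $\XMAX$-rounding of $X_i(t)$, which could in principle reduce $\E[\Sigma_G(t)^2]$ relative to the unclipped martingale analysis above. The saving grace is that each good $X_i(t)$ is a sum of at most $m$ independent fair $\pm 1$ flips, so Hoeffding gives $\Pr[\,|X_i(t)|>\XMAX\,]\le 2n^{-c/2}$, and a Cauchy–Schwarz bound on $\E[X_i(t)^2\cdot \mathbb{1}[|X_i(t)|>\XMAX]]$ shows that the total loss from clipping across all good players is $o(mf)$, so it is absorbed into the constant and the stated lower bound survives. Everything else is routine, so the only real obstacle is checking that the scheduler's adaptivity (in particular, its ability to choose which good player writes next based on past outcomes) does not break the martingale property — and this is precisely what the $\mathcal{F}_{k-1}$-measurability of $i_k$ ensures.
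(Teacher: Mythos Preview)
Your proposal is correct and follows essentially the same approach as the paper: both arguments reduce $\E[\Sigma_G(t)^2]$ to the expected sum of squared weights over good coin flips (you via the Doob $L^2$ identity $\E[M_\tau^2]=\E[\langle M\rangle_\tau]$, the paper via a direct increment-by-increment induction showing $\E[S_r^2]\ge\E[S_{r-1}^2]$), and then lower-bound that sum using the guaranteed $m(n-2f)$ good full columns together with \autoref{lem:weight-bound}. Your worry about the $\XMAX$-clipping is unnecessary here, since in the paper $\Sigma_G(t)$ is the raw weighted sum of the good coin flips in $\BB_{2t}$ rather than a sum of clipped column totals.
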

\begin{proof}
The good players write between $m(n-2f)$ and $mn$ coin flips to $\BB_{2t}$, 
at the adversary's discretion. 
For $r\in [0,2mf]$, 
let $S_r$ be the sum of the first $m(n-2f)+r$ coin flips 
generated by the good players. 
Then $\E[\Sigma_G(t)^2] = \E[S_{2fm}^2]$, 
which we claim is at least $\E[S_0]$.
In general 
$S_r = S_{r-1} + w_i\delta_r$, 
where
$\delta_r \in \{-1,1\}$ if the adversary
lets player $i$ flip the next coin and
$\delta_r = 0$ if the adversary 
chooses to stop allowing coin flips.
If $\delta_r=0$ then $S_r=S_{r-1}$ and 
if $\delta_r\in\{-1,1\}$ then
\[
    S_r^2 = \begin{cases}
    (S_{r-1} + w_i)^2 = S_{r-1}^2 + 2w_iS_{r-1} + w_i^2 &\textrm{with probability } \frac{1}{2}, \\
    (S_{r-1} - w_i)^2 = S_{r-1}^2 - 2w_iS_{r-1} + w_i^2 &\textrm{with probability } \frac{1}{2}.
    \end{cases}
\]
Thus, $\E[S_r^2 \mid \delta_r \neq 0] 
= S_{r-1}^2 + w_i^2 > S_{r-1}^2$,
and in general,
$\E[S_{r}^2] \geq \E[S_{r-1}^2] \geq \E[S_0^2]$.
Thus, the adversarial strategy minimizing $\Sigma_G(t)^2$
is to allow as few coin flips as possible, 
and from those $n-2f$ 
players $\hat{G}$ with the smallest weights.
By \autoref{lem:weight-bound} we have
\[
\E[S_0^2] \geq m\sum_{i \in \hat{G}} w_i^2 \geq \frac{1}{2}\epsilon^2 mf.
\]
\end{proof}

\begin{lemma}[Cf.~\cite{HuangPZ22}]
\label{lem:sumSGsquared}
With high probability, 
$\sum_{t\in [T]} \Sigma_G(t)^2 \geq \frac{1}{2}\epsilon^2 mfT 
- mn\sqrt{T(c\ln n)^3}$.
\end{lemma}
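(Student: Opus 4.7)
The plan is to decompose $\sum_t \Sigma_G(t)^2$ as a deterministic lower bound plus a martingale, and then apply Azuma's inequality to the martingale. This lifts the per-iteration expectation bound of \autoref{lem:expected-sum-squared} to a cumulative high-probability bound.

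First, I would extend the martingale argument hidden inside the proof of \autoref{lem:expected-sum-squared} across iterations. Index all good-player coin flip events globally in the order they occur; within iteration $t$, let $S^{(t)}_r$ denote the weighted partial sum of its first $r$ good flips, so that $S^{(t)}_{F_t} = \Sigma_G(t)$ for $F_t$ the total number of good flips in iteration $t$. Mirroring the computation in \autoref{lem:expected-sum-squared}, each flip $r$ assigned to player $i(r)$ contributes
\[
(S^{(t)}_r)^2 - (S^{(t)}_{r-1})^2 = \pm 2 w_{i(r)} S^{(t)}_{r-1} + w_{i(r)}^2,
\]
whose conditional expectation is $w_{i(r)}^2$. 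Consequently
\[
Z = \sum_{t,r} \bigl( (S^{(t)}_r)^2 - (S^{(t)}_{r-1})^2 - w_{i(r)}^2 \bigr)
\]
(summed over all good flip events across the $T$ iterations) is a martingale, and telescoping gives $\sum_t \Sigma_G(t)^2 = \sum_{t,r} w_{i(r)}^2 + Z_{\mathrm{final}}$. Because the adversary must let the $n-2f$ good players with smallest weights complete all $m$ of their flips in each iteration, \autoref{lem:weight-bound} yields $\sum_r w_{i(r)}^2 \geq m\sum_{i\in\hat{G}} w_i^2 \geq \tfrac{1}{2}\epsilon^2 mf$ per iteration, giving the deterministic lower bound $\tfrac{1}{2}\epsilon^2 mfT$.

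Second, I would bound $Z_{\mathrm{final}}$ using Azuma's inequality. The per-step difference is $2 w_{i(r)}|S^{(t)}_{r-1}| \leq 2|S^{(t)}_{r-1}|$, which is random, so I need an a~priori ceiling on the within-iteration partial sums. Since each $S^{(t)}_r$ is a sum of at most $mn$ independent weighted $\pm 1$ flips with weights in $[0,1]$, a separate Azuma argument together with a union bound over iterations and partial-sum positions shows $|S^{(t)}_r| = O(\sqrt{mn\cdot c\ln n})$ for all $t$ and $r$ with high probability. Stopping the martingale $Z$ at the first time this ceiling is violated preserves the martingale property and yields bounded differences $O(\sqrt{mn \cdot c\ln n})$ over at most $Tmn$ steps. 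Azuma then gives
\[
|Z_{\mathrm{final}}| = O\!\left(\sqrt{mn\cdot c\ln n}\cdot \sqrt{Tmn\cdot c\ln n}\right) = O\!\left(mn\sqrt{T}\cdot c\ln n\right),
\]
which is within the target $mn\sqrt{T(c\ln n)^3}$ since $c\ln n \geq 1$, with high probability.

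The main obstacle is precisely that the step size $2|S^{(t)}_{r-1}|$ of the natural martingale depends on the random running partial sum, so Azuha cannot be applied with a constant worst-case bound (the deterministic ceiling $|S^{(t)}_r| \leq mn$ is too weak by a factor of $\sqrt{mn}$). The fix is the two-stage concentration above: a preliminary Azuma argument produces a high-probability ceiling on all within-iteration partial sums, and the cross-iteration martingale is analyzed on the stopped process at the first violation, so the exceptional event contributes only to the failure probability.
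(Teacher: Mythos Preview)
Your argument is correct, but the paper takes a coarser and simpler route. Rather than tracking the martingale at the level of individual coin flips, the paper works at the level of iterations: it defines $A_0=0$ and $A_t = A_{t-1} + \Sigma_G(t)^2 - \tfrac{1}{2}\epsilon^2 mf$, observes via \autoref{lem:expected-sum-squared} that $(A_t)$ is a submartingale, bounds the single-step increment by $\Sigma_G(t)^2 \leq mn\cdot c\ln n$ (which holds with high probability since $\Sigma_G(t)$ is a sum of at most $mn$ weighted $\pm 1$ flips), and applies Azuma over just $T$ steps. That yields $A_T \geq -(mn\cdot c\ln n)\sqrt{T\cdot c\ln n}$ directly, and the lemma follows.

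The comparison: your fine-grained decomposition makes the deterministic part $\sum_{t,r} w_{i(r)}^2$ explicit and actually yields a slightly sharper deviation $O(mn\sqrt{T}\cdot c\ln n)$ versus the paper's $mn\sqrt{T}(c\ln n)^{3/2}$, at the price of a two-stage concentration (first a uniform ceiling on all partial sums $S^{(t)}_r$, then the stopped martingale). The paper avoids all of that by absorbing the within-iteration randomness into a single high-probability bound on $\Sigma_G(t)^2$ and invoking \autoref{lem:expected-sum-squared} as a black box. The obstacle you highlight---random step sizes---is present in both approaches and is handled the same way (a preliminary high-probability ceiling plus a stopping argument), but the paper only needs it once per iteration rather than once per coin flip.
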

\begin{proof}
Consider the sequence $(A_t)_{t \in [T]}$, where $A_0 = 0$, $A_t = A_{t-1} + \Sigma_G(t)^2 - \frac{1}{2}\epsilon^2 mf$. 
By \autoref{lem:expected-sum-squared}, 
$\E[\Sigma_G(t)^2] \geq \frac{1}{2}\epsilon^2 mf$,
so $(A_t)_t$ is a submartingale. 
Since $\Sigma_G(t)$ is a sum of at most $mn$ coin flips, 
$|A_t - A_{t-1}| = \Sigma_G(t)^2 \leq mn\cdot c\ln n$ 
with high probability. 
By Azuma's inequality, with high probability, 
$A_T \geq -(mn c\ln n)\sqrt{Tc\ln n}$
and 
\[
\sum_{t \in [T]} \Sigma_G(t)^2 
= 
\frac{1}{2}\epsilon^2mfT + A_T \geq \frac{1}{2}\epsilon^2mfT - mn\sqrt{T(c\ln n)^3}.
\]
\end{proof}

\begin{lemma}\label{lem:biasSG}
With high probability, 
$\sum_{t \in [T]} \overline{\bias(t)}\Sigma_G(t) \leq mn\sqrt{nT(c\ln n)^3}$.
\end{lemma}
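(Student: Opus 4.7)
The plan is to exhibit the sum as a martingale and apply Azuma's inequality with high-probability upper bounds on the increments. First, fix the filtration $(\mathcal{F}_t)_{t\in[T]}$, where $\mathcal{F}_t$ records all randomness and adversarial choices up to and including the completion of $\BB_{2t}$, and let $\mathcal{G}_t\supseteq\mathcal{F}_{t-1}$ be the sub-$\sigma$-algebra recording everything up to the moment $\biasbar(t)$ becomes fixed (i.e., when the first good player $\hat{p}$ fixes its view $\BB^{(\hat{p},2t-1)}$); crucially, by construction, no good player has yet flipped any coin for $\BB_{2t}$ at this moment. Let $Y_t = \biasbar(t)\Sigma_G(t)$. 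Because $\biasbar(t)$ is $\mathcal{G}_t$-measurable and the good players' coin flips for $\BB_{2t}$ are independent fair $\pm 1$ variables conditional on $\mathcal{G}_t$, one has $\E[\Sigma_G(t)\mid \mathcal{G}_t]=0$ (irrespective of the adversary's adaptive scheduling of the writes), hence $\E[Y_t\mid \mathcal{F}_{t-1}] = \E[\biasbar(t)\cdot\E[\Sigma_G(t)\mid\mathcal{G}_t]\mid \mathcal{F}_{t-1}] = 0$. Thus $M_t \bydef \sum_{s\le t} Y_s$ is a martingale with respect to $(\mathcal{F}_t)$, and the claim is that $|M_T|\le mn\sqrt{nT(c\ln n)^3}$ with high probability.

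Next I would establish per-step bounds on $|Y_t|$. For the bias factor, \autoref{lem:bias-vs-Akeep} gives the deterministic bound $|\biasbar(t)| \le nm_0 = n\sqrt{m\cdot c\ln n}$, since $\biasbar(t)$ is a sum of at most $nm_0$ entries each in $\{-1,0,1\}$. For the sum of good coin flips, I would apply Azuma to the random-walk $(S_r)$ of partial sums of good coin flips in $\BB_{2t}$ (at most $mn$ of them), exactly as in \autoref{lem:sumSGsquared}: with high probability, $|\Sigma_G(t)| \le \sqrt{mn\cdot c\ln n}$, and a union bound over all $t\in[T]$ keeps this uniform with high probability.

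To turn these into an Azuma bound on $M_T$, I would use a stopping-time argument to make the increments uniformly bounded. Let $\tau$ be the first $t$ such that $|\Sigma_G(t)| > \sqrt{mn\cdot c\ln n}$, and $\tau = T+1$ otherwise; the previous paragraph shows that $\tau > T$ with high probability. The stopped martingale $\tilde M_t = M_{t\wedge\tau}$ has bounded differences
\[
|\tilde M_t - \tilde M_{t-1}| \le n\sqrt{m\cdot c\ln n}\cdot\sqrt{mn\cdot c\ln n} = mn\sqrt{n}\cdot c\ln n.
\]
Applying Azuma's inequality with deviation $\lambda = mn\sqrt{nT(c\ln n)^3}$ gives
\[
\Pr\bigl[|\tilde M_T| \ge \lambda\bigr] \le 2\exp\!\left(-\frac{\lambda^2}{2T\,(mn\sqrt{n}\,c\ln n)^2}\right) = 2\exp\!\left(-\tfrac{1}{2}\,c\ln n\right),
\]
which is $n^{-\Omega(c)}$. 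Combined with the high-probability event $\{\tau>T\}$ on which $\tilde M_T = M_T$, this yields $|\sum_{t\in[T]} \biasbar(t)\Sigma_G(t)| \le mn\sqrt{nT(c\ln n)^3}$ with high probability, as required.

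The main obstacle is bookkeeping around the filtration: one has to be careful that $\biasbar(t)$ is truly fixed before any randomness of $\Sigma_G(t)$ is revealed (this is exactly why $\biasbar$ rather than $\bias^{(p)}$ was introduced, and why \autoref{lem:bias-properties} Part 2 is stated), and one has to handle the fact that without high-probability bounds on $|\Sigma_G(t)|$ the increments of $M_t$ are only deterministically bounded by $mn\cdot nm_0 = mn^2\sqrt{mc\ln n}$, which is too weak by a factor of roughly $\sqrt{n/(c\ln n)}$. The stopping time $\tau$ is the clean device that bridges the gap.
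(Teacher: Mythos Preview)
Your approach is exactly the paper's: show $\E[\biasbar(t)\Sigma_G(t)\mid\mathcal{F}_{t-1}]=0$, bound $|\biasbar(t)|\le nm_0$ deterministically and $|\Sigma_G(t)|\le\sqrt{mn\cdot c\ln n}$ with high probability, and apply Azuma. Your write-up is in fact more careful than the paper's one-paragraph proof, which simply asserts the high-probability increment bound and invokes Azuma without further comment.

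One small technical slip worth flagging: as written, the stopped martingale $\tilde M_t=M_{t\wedge\tau}$ does \emph{not} have the claimed bounded differences at the step $t=\tau$ itself, since by definition $|\Sigma_G(\tau)|>\sqrt{mn\cdot c\ln n}$ there. This is the standard annoyance with stopping-time arguments of this type, and it is easily patched (for instance, work with the martingale $N_t=\sum_{s\le t} Y_s\,\mathbf{1}[\tau\ge s]$ and separately control the single possibly-large increment at $s=\tau$ using the deterministic bound $|\Sigma_G(\tau)|\le n\XMAX$ together with $\Pr[\tau\le T]=n^{-\Omega(c)}$; or directly cite a version of Azuma for martingales whose differences are bounded on a high-probability event). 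The paper's proof glosses over precisely the same point, so your proposal is at the same level of rigor and follows the same route.
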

\begin{proof}
By \autoref{lem:bias-properties}, 
$\E[\biasbar(t)\Sigma_G(t)] = 0$
and hence the sequence $(A_t)_{t\in [T]}$ is a martingale,
where $A_0 = 0$ and $A_t = A_{t-1} + \biasbar(t)\Sigma_G(t)$. 
With high probability $|\Sigma_G(t)|\leq \sqrt{mn(c\ln n)}$
and $\biasbar(t) \leq nm_0 = n\sqrt{m(c\ln n)}$, 
hence by Azuma's inequality, 
$\sum_{t \in [T]} \biasbar(t)\Sigma_G(t) \leq 
mn\sqrt{nT(c\ln n)^3}$  
with high probability. 
\end{proof}

We are now equipped to prove the second part of \autoref{lem:n/3-gap-lemma}.

\begin{proof}[Proof of Part 2 of \Cref{lem:n/3-gap-lemma}] 
Recall $G,B$ are the sets of good and bad players, which, by assumption, do not change during the epoch.
\begin{align*}
    - \sum_{(i, j)\in G\times B} \CORR(i, j) &= -\sum_{t\in [T]} \Sigma_G(t)\Sigma_B(t) \\
    &\geq \sum_{t\in [T]} \left(\Sigma_G(t)^2 - 2f|\Sigma_G(t)| 
    + \biasbar(t)\Sigma_G(t) \right) 
    \tag{\autoref{lem:bias-properties}}\\
    &\geq \frac{1}{2}\epsilon^2 mfT-\tilde{O}(mn\sqrt{T}) - 2f\tilde{O}(\sqrt{mn}T) - \tilde{O}(mn\sqrt{nT}) 
    \tag{W.h.p., by Lemmas~\ref{lem:sumSGsquared}, \ref{lem:biasSG}}\\
    &= \left(\frac{1}{2}\epsilon^2 - \tilde{O}\left(\sqrt{\frac{n}{m}}\right)\right)mfT - \tilde{O}(mn\sqrt{nT}) \\
    &\ge \left(\frac{1}{2}\epsilon^2 
    - o(\epsilon^2)\right) mfT - \tilde{O}(mn\sqrt{nT}) 
    \tag{whenever $m = \Omega(n\ln n/\epsilon^4)$} 
    \\
    &\ge\frac14\epsilon^2 fmT 
    \tag{whenever $T=\Omega(n\ln^3 n/\epsilon^4)$} 
\end{align*}

Finally, since $\max\{0,-\CORR(i,j) - w_iw_j\beta\} \geq -\CORR(i,j) - w_iw_j \beta$, 
\autoref{lem:n/3-gap-lemma}(2) 
follows from 
the above inequality 
and the fact that 
\begin{align*}
    \sum_{(i, j)\in G\times B} w_iw_j\beta \leq |G|\cdot |B|\cdot \beta \leq nf\cdot \tilde{O}(m\sqrt{T}) 
    \le \frac{1}{8}\epsilon^2 fmT 
\end{align*}
holds whenever $T = \Omega(n^2\ln^3 n/\epsilon^4)$.
\end{proof}

\subsection{Blacklisting via Fractional Matching}\label{sect:weight-reduction}

When the $T$ iterations of epoch $k$ are complete, 
we reduce the weight vector $(w_i)$ in preparation for epoch $k+1$.
According to \autoref{lem:n/3-gap-lemma}, if a correlation score $-\CORR(i,j)$ is too large, 
$B\cap\{i,j\}\neq\emptyset$ w.h.p., so reducing \emph{both} $i$ and $j$'s weights
by the \emph{same} amount preserves \autoref{inv:weights}.  
With this end in mind, \WeightUpdate{} 
(\autoref{alg:weights-update-real-bb}) 
constructs a complete, vertex- and edge-capacitated graph $H$ on $[n]$, 
finds a fractional maximal matching $\mu$ in $H$, 
then docks the weights of $i$ and $j$ by $\mu(i,j)$, 
for each edge $(i,j)$.  This is essentially
the same as the blacklisting routine of~\cite{HuangPZ22}, except we are paying
attention to large \emph{negative} correlations instead
of large positive correlations and individual deviations.

\begin{definition}[Fractional Maximal Matching]
Let $H=(V,E,c_V,c_E)$ be a graph where $c_V : V\rightarrow \mathbb{R}_{\geq 0}$ are vertex capacities
and $c_E : E\rightarrow \mathbb{R}_{\geq 0}$ are edge capacities.  
A function $\mu : E\rightarrow \mathbb{R}_{\geq 0}$
is a \emph{feasible fractional matching} 
if $\mu(i,j)\leq c_E(i,j)$ and $\sum_{j} \mu(i,j) \leq c_V(i)$.
It is \emph{maximal} if it is not strictly dominated by any feasible $\mu'$. 
\end{definition}

\paragraph{Rounding Weights Down.}
At the end of epoch $k$, player $p$ generates a 
local weight vector
$(w_{i,k+1}^{(p)})_{i\in [n]}$,
which is a function of $(w_{i,k})_{i\in [n]}$ and 
its historical view $\BB^{(p,2kT)}$.  
(There are $2T$ blackboards in each epoch.)
The consensus weight vector $(w_{i,k+1})_{i\in [n]}$
is obtained by everyone 
adopting the weight of $i$ 
according to player $i$'s local view, 
and rounding down if it is too close to zero.

\[
w_{i,k+1} = \left\{\begin{array}{ll}
w_{i,k+1}^{(i)}  & \mbox{ if $w_{i,k+1}^{(i)} > w_{\min} = \frac{\sqrt{n}}{T}$,}\\
0           & \mbox{ otherwise.}
\end{array}\right.
\]
Recall from \autoref{thm:blackboard} 
that if $i$ writes anything to any blackboard in epoch $k+1$,
that every player can deduce what its view $\BB^{(i,2kT)}$
looked like at the end of epoch $k$, and hence what
$w_{i,k+1}^{(i)}$ and $w_{i,k+1}$ are.  By ensuring that
all participating players use exactly the \emph{same} weight function, 
we eliminate one source of potential numerical disagreement.

We will see that the
maximum pointwise disagreement in the local weight vectors $|w_{i,k+1}^{(p)}-w_{i,k+1}^{(q)}|$ is at most $w_{\min}$. 
As a consequence, if any $p$ thinks that $w_{i,k+1}^{(p)}=0$ 
then all players will agree that $w_{i,k+1}=0$.

\newcommand{\ErrT}{\ensuremath{err_T}}

\paragraph{Excess Graph.} The \emph{excess correlation graph} 
$H=(V, E, c_V,c_E)$ used in \autoref{alg:weights-update-real-bb} 
is a complete undirected graph on $V=[n]$, 
capacitated as follows:
\begin{align*}
    c_V(i) &= w_i, \\
    c_E(i, j) &= \frac{8}{\epsilon^2 fmT} \cdot \max\{0, \CORR(i, j)-w_iw_j\beta \},
\end{align*}
where $\beta$ is the quantity from \autoref{lem:n/3-gap-lemma}.
By Part 1 of \autoref{lem:n/3-gap-lemma}, $c_E(i,j)=0$ whenever both $i$ and $j$ are good.

\medskip

\begin{algorithm}[h]
\caption{\WeightUpdate{} \ \ \emph{from the perspective of player $p$}.}\label{alg:weights-update-real-bb} 
\textbf{Output:} Weights $(w_{i,k})_{i\in [n], k\geq 0}$ where $w_{i,k}$ refers to the
weight $w_i$ \emph{after} processing epoch $k-1$, 
and is used throughout epoch $k$.\\
\begin{algorithmic}[1]
\State Set $w_{i, 1}\gets 1$ for all $i$.\Comment{All weights are 1 in epoch 1.}
\For{epoch $k=1, 2, \ldots, \KMAX$}\Comment{$\KMAX = $ last epoch}
\State Execute $T$ iterations of \BrachaAgreement, using weights $(w_{i,k})$ in \CoinFlip. 
Let $\CORR^{(p)}$ be the resulting 
correlation scores known to $p$.
Construct the excess 
correlation graph $H_k^{(p)}$ with capacities:
\begin{align*}
c_{V}(i) &= w_{i, k}, \\[4pt]
c_{E}^{(p)}(i, j) &=  \displaystyle\frac{8}{\epsilon^2 fmT}\cdot \max\left\{ 0, \CORR^{(p)}(i, j) - w_{i, k}w_{j, k}\beta\right\}.
\end{align*}
\State $\mu_k^{(p)} \gets \RisingTide(H_k^{(p)})$ \Comment{A maximal fractional matching}
\State For each $i$, once $w_{i,k}$ is known, set
\[
w_{i, k+1}^{(p)} \gets w_{i, k} - \sum_{j} \mu_k^{(p)}(i, j).{\hspace{3.3cm}}{\ }
\]
\State Once the vector $(w_{j,k+1}^{(i)})_{j\in [n]}$ is known for $i\in [n]$, set
\[
w_{i,k+1} = \left\{\begin{array}{ll}
w_{i,k+1}^{(i)}  & \mbox{ if $w_{i,k+1}^{(i)} > w_{\min} \bydef \frac{\sqrt{n}}{T}$,}\\
0           & \mbox{ otherwise.}
\end{array}\right.
\]
\EndFor
\end{algorithmic}
\end{algorithm}

The \WeightUpdate{} algorithm from the perspective of player $p$ is presented in \autoref{alg:weights-update-real-bb}.
We want to ensure that the fractional matchings computed by good players are numerically very close to each other, 
and for this reason, we use a specific maximal matching
algorithm called \RisingTide{} (\autoref{alg:rising-tide})
that has a continuous Lipschitz property~\cite[Lemma~13]{HuangPZ22}, i.e.,
bounded perturbations to its input yield bounded perturbations to its output.
Other natural maximal matching algorithms such 
as \emph{greedy} do not have this property.

\subsubsection{Rising Tide Algorithm}
\label{sect:rising-tide}

The \RisingTide{} algorithm initializes $\mu=0$ and 
simply simulates
the continuous process of increasing 
all $\mu(i,j)$-values in lockstep, so long as 
$i$, $j$, and $(i,j)$ are not saturated.
At the moment one becomes saturated, $\mu(i,j)$ is 
frozen at its current value.

\begin{algorithm}[h]
\caption{$\RisingTide(H=(V,E,c_V,c_E))$}\label{alg:rising-tide}
\begin{algorithmic}[1]
\State $E'\gets \{(i, j)\in E\ |\ c_E(i, j) > 0\}$.
\State $\mu(i, j)\gets 0$ for all $i,j\in V$.
\While{$E' \neq\emptyset$}
    \State Let $\mu_{E'}(i, j) = \begin{cases} 1 & \text{ if } (i, j)\in E'\\ 0 & \text{ otherwise.}\end{cases}$.
    \State Choose maximum $\epsilon > 0$ such that $\mu'=\mu+\epsilon \mu_{E'}$ is a feasible fractional matching.
    \State Set $\mu\gets \mu'$.
    \State $E' \gets E' - \{(i,j) \mid \mbox{$i$ or $j$ or $(i,j)$ is saturated}\}$\Comment{$\mu(i,j)$ cannot increase}\label{line:remove-edges}
\EndWhile
\State \Return $\mu$.
\end{algorithmic}
\end{algorithm}

Recall that $c_V(i)$ is initialized to be the 
old weight $w_{i,k}$
and 
the new weight in $p$'s local view 
is set to $w_{i,k+1}^{(p)} = c_V(i)-\sum_j \mu_k^{(p)}(i,j)$.  
We are mainly
interested in differences in the new weight 
vector computed by 
players that begin with slightly 
different graphs $H^{(p)},H^{(q)}$.  
\autoref{lem:rising-tide-output}~\cite{HuangPZ22} 
bounds the distance between outputs
in terms of the distance between inputs.

\begin{lemma}[{\cite[Lemma 13]{HuangPZ22}}]\label{lem:rising-tide-output}
Let $H^{(p)}=(V, E, c_V^{(p)}, c_E^{(p)})$ 
and $H^{(q)}=(V, E, c_V^{(q)}, c_E^{(q)})$ 
be two capacitated graphs, 
which differ by $\eta_E = \sum_{i, j} |c_E^{(p)}(i, j) - c_E^{(q)}(i, j)|$ 
in their edge capacities
and 
$\eta_V = \sum_{i} |c_V^{(p)}(i) - c_V^{(q)}(i)|$ 
in their vertex capacities.
Let $\mu^{(p)}$ and $\mu^{(q)}$ be the fractional matchings computed by \RisingTide{} (\autoref{alg:rising-tide}).
Then: 
\[
\sum_{i} \left| \left(c_V^{(p)}(i) -  \sum_j \mu^{(p)}(i, j)\right) - \left(c_V^{(q)}(i) - \sum_j \mu^{(q)}(i, j)\right)\right| \le \eta_V + 2\eta_E.
\]
\end{lemma}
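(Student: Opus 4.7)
The plan is to couple the two executions of Rising Tide in continuous time, indexed by a shared tide parameter $t \geq 0$. In each execution, active edges grow at rate $1$ in lockstep until an event (edge or vertex saturation) deactivates them. For each vertex $i$, let $r^{(p)}(i,t) = c_V^{(p)}(i) - \sum_j \mu_t^{(p)}(i,j)$ denote its residual capacity in execution $p$ at time $t$, and analogously for $q$. The target quantity is the final discrepancy $\Delta := \sum_i |r^{(p)}(i,\infty) - r^{(q)}(i,\infty)|$.

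First I would reduce to two simpler cases via the triangle inequality, interposing the hybrid graph $H' = (V, E, c_V^{(p)}, c_E^{(q)})$. The two cases are (A) only vertex capacities differ, in which case I would show $\Delta \leq \eta_V$, and (B) only edge capacities differ, in which case I would show $\Delta \leq 2\eta_E$. Combining the two contributions via triangle inequality yields the desired bound $\eta_V + 2\eta_E$.

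Case (A) is the more tractable one: with identical edge capacities, the sets of active edges in the two executions can begin to differ only once a vertex saturates in one but not the other. A local accounting argument can then charge the matching excess accumulated on the incident active edges to the original vertex-capacity gap. Case (B) hinges on the observation that perturbing a single edge capacity $c_E(i,j)$ by $\delta$ shifts $\mu(i,j)$ by at most $\delta$, so the two endpoints' residuals each shift by at most $\delta$, yielding a total residual shift of at most $2\delta$. Decomposing the total edge-capacity perturbation $\eta_E$ into a sequence of single-edge shifts and telescoping gives the $2\eta_E$ bound.

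The main obstacle is controlling \emph{cascading} saturations: if some vertex $v$ saturates strictly earlier in execution $p$ than in $q$, the edges incident to $v$ freeze in $p$ but keep growing in $q$, which in turn can shift the saturation times of \emph{other} vertices and edges, creating further discrepancies. Handling this cleanly will likely require either (i) an explicit non-increasing potential function that combines $\Delta(t)$ with an edge-based disagreement term, so that discrepancies created at one event are offset by simultaneous reductions in the edge term, or (ii) an induction on the sequence of saturation events with careful bookkeeping for newly created and newly resolved disagreements. Once cascades are controlled, assembling the per-event bounds via $L_1$ subadditivity concludes the proof.
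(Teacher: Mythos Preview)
The paper does not prove this lemma; it is quoted from \cite[Lemma~13]{HuangPZ22} and used as a black box, so there is no in-paper argument to compare against. I comment only on the internal soundness of your outline.

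The hybrid decomposition into Cases~(A) and~(B) via the triangle inequality is fine, and the continuous-time coupling is the natural setup. The gap is in your stated mechanism for Case~(B). You write that a single-edge perturbation of size $\delta$ shifts $\mu(i,j)$ by at most $\delta$, ``so the two endpoints' residuals each shift by at most $\delta$, yielding a total residual shift of at most $2\delta$.'' But residuals of vertices \emph{other} than $i,j$ can also move. Take the path on $i,j,k$ with $c_V(i)=c_V(k)=10$, $c_V(j)=1$, $c_E(j,k)=1$, and compare $c_E(i,j)=1$ against $c_E(i,j)=0.1$. In the first run $j$ saturates at $t=\tfrac12$ and both edges freeze at $\tfrac12$; in the second, $(i,j)$ saturates at $t=0.1$, after which $(j,k)$ alone grows until $j$ saturates at $t=0.9$, so $\mu(j,k)=0.9$. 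The residual of $k$ moves from $9.5$ to $9.1$ even though $k$ is not an endpoint of the perturbed edge. The $2\delta$ bound happens to hold here ($0.4+0+0.4\le 1.8$), but not for the reason you give.

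You then flag cascades as ``the main obstacle,'' but that is precisely the phenomenon that breaks the endpoint-only accounting you just offered, so the two paragraphs are in tension. More to the point, decomposing $\eta_E$ into single-edge perturbations does not simplify the problem: the single-edge case already requires the full cascade analysis, so telescoping over edges buys no leverage. A direct argument on the coupled execution---tracking, alongside $\Delta(t)$, the total $\mu$-mass accruing on edges that are active in exactly one of the two runs---is the cleaner route; the event-by-event induction can also be made to work, but it needs a concrete per-event invariant, not just the hope that created and resolved disagreements balance.
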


\subsection{Error Accumulation and Reaching Agreement}
\label{sect:bounding-error}

The maximum number of epochs is $\KMAX=3f$.
Let $k\in [1,\KMAX]$ be the index of the current epoch, 
and let $(w_{i, k})$ be the weights that were used in the execution 
of $\CoinFlip$ during epoch $k$.
Upon completing epoch $k$, each player $p$ 
applies \autoref{alg:weights-update-real-bb} 
to update the consensus weight vector $(w_{i, k})_{i\in [n]}$ 
to produce a local weight vector $(w_{i,k+1}^{(p)})_{i\in [n]}$,
and then the consensus weight vector $(w_{i,k+1})_{i\in [n]}$ 
used throughout epoch $k+1$.

\begin{lemma}[Maintaining \autoref{inv:weights}]\label{lem:maintaining-inv}
Suppose for some $\epsilon>0$ that $n=(3+\epsilon)f$, 
$m=\Omega(n\ln n/\epsilon^4)$, 
and $T=\Omega(n^2\ln^3 n/\epsilon^4)$.
At any point in epoch $k\in[1,\KMAX]$, with high probability,
\[
\sum_{i\in G} (1 - w_{i, k}) \le \sum_{i\in B} (1 - w_{i, k}) + \frac{\epsilon^4}{\sqrt{n\ln^6 n}}\cdot (k-1).
\]
\end{lemma}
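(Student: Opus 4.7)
The plan is induction on $k$. The base case $k=1$ is immediate since all weights equal $1$. For the inductive step I show that during epoch $k$ the gap $\sum_{i\in G}(1-w_i)-\sum_{i\in B}(1-w_i)$ grows by at most $\epsilon^4/\sqrt{n\ln^6 n}$ with high probability. Throughout I take $G=G_{k+1}$ and $B=B_{k+1}$ to be the partition at the end of epoch $k$; any player corrupted mid-epoch is already in $B$, so its weight loss is charged to the bad side, which only shrinks the gap.

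The first step is to analyze an ``ideal'' matching. Let $H_k$ be the excess-correlation graph built from the consensus scores $\CORR(i,j)$ (computed against the true iterated blackboard) with $c_V(i)=w_{i,k}$, and set $\mu=\RisingTide(H_k)$. By Part~1 of \autoref{lem:n/3-gap-lemma}, with high probability $\CORR(i,j)\geq -w_{i,k}w_{j,k}\beta$ for every $i,j\in G$, so $c_E(i,j)=0$ and $\mu(i,j)=0$ on good-good pairs. Every matched edge therefore has at least one endpoint in $B$, so the ideal deductions $\Delta^{\star}_i=\sum_j\mu(i,j)$ satisfy
\[
\sum_{i\in G}\Delta^{\star}_i \;=\; \sum_{i\in G,\,j\in B}\mu(i,j) \;\leq\; \sum_{i\in B}\sum_j \mu(i,j) \;=\; \sum_{i\in B}\Delta^{\star}_i,
\]
so the ideal matching leaves the gap unchanged.

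The second step is to quantify how far the actual deductions can deviate from $\Delta^{\star}$. Each player $i$ runs \RisingTide{} on its local $H_k^{(i)}$ built from $\CORR^{(i)}$, and the consensus weight $w_{i,k+1}$ is derived from $\mu^{(i)}$ (and zeroed if below $w_{\min}=\sqrt{n}/T$). By \autoref{thm:blackboard}(\ref{thm:blackboard-ambiguous}), $\BB^{(i,2kT)}$ differs from $\BB$ in at most $f$ cells; each discrepancy shifts a column sum by at most $1$, and therefore each of the $\leq n$ affected correlation scores by at most $\XMAX$. After the $8/(\epsilon^2 fmT)$ scaling this yields $\eta_E^{(i)}=O(n\XMAX/(\epsilon^2 mT))$, while $\eta_V=0$ since all players use the same consensus $w_{i,k}$. \autoref{lem:rising-tide-output} then gives $\sum_{i'}\bigl|\sum_j\mu^{(i)}(i',j)-\sum_j\mu(i',j)\bigr|\leq 2\eta_E^{(i)}$, and summing over $i\in G$,
\[
\sum_{i\in G}(w_{i,k}-w_{i,k+1}) \;\leq\; \sum_{i\in G}\Delta^{\star}_i \;+\; \sum_{i\in G}2\eta_E^{(i)} \;+\; nw_{\min},
\]
where the last term absorbs the rounding loss. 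Combined with Step~1 and the symmetric lower bound on the $B$-side, the per-epoch gap increase is $O(n^2\XMAX/(\epsilon^2 mT))+n^{3/2}/T$, which after substituting $m=\Omega(n\ln n/\epsilon^4)$, $T=\Omega(n^2\ln^3 n/\epsilon^4)$, and $\XMAX=\sqrt{mc\ln n}$ becomes $O(\epsilon^4/\sqrt{n\ln^6 n})$, closing the induction.

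The main obstacle is that Part~1 of \autoref{lem:n/3-gap-lemma} guarantees $c_E(i,j)=0$ for good-good pairs only in the consensus graph $H_k$, whereas in each local $H_k^{(i)}$ such an edge could carry a small positive capacity and be matched. Two routes resolve this: either widen the $\beta$ margin by the per-player drift so good-good edges still vanish in every $H_k^{(i)}$, or accept the small good-good leakage and absorb it into the $O(\eta_E)$ Lipschitz bound. Both paths yield the same quantitative conclusion, but the drift bookkeeping — propagating a $\leq f$-cell view discrepancy through column sums, into $\CORR$ scores, into edge capacities, and finally through \RisingTide{} — is where the constants get tight and where most of the care is required.
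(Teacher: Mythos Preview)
Your proposal is correct and follows essentially the same approach as the paper: induction on $k$, using Part~1 of \autoref{lem:n/3-gap-lemma} to show that a reference matching deducts at least as much from $B$ as from $G$, then bounding the per-epoch drift via the Lipschitz property of \RisingTide{} (\autoref{lem:rising-tide-output}) together with the at-most-$f$-cell view discrepancy from \autoref{thm:blackboard}, arriving at the same $O(n^{3/2}/T)\le \epsilon^4/\sqrt{n\ln^6 n}$ bound. The only cosmetic difference is the choice of reference: you compare each local graph $H_k^{(i)}$ to the graph built from the \emph{true} blackboard, whereas the paper fixes a single good player $p$ and compares to $H_k^{(p)}$; your route is arguably cleaner since Part~1 of \autoref{lem:n/3-gap-lemma} is stated for the true $\CORR$ scores, while the paper implicitly relies on Part~1 also holding in $p$'s view.
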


\begin{proof}
By induction on $k$.
For the base case $k=1$ all the weights are $1$ so the claim clearly holds. 
Now suppose the claim holds for $k$ and consider $k+1$.
Fix any good player $p$. 
A consequence of Part 1 of \autoref{lem:n/3-gap-lemma}
is that with high probability, 
player $p$'s view of the weight vector, 
$(w_{i,k+1}^{(p)})$, 
is derived from $(w_{i,k})$ 
by deducting at least as much weight from bad players as from good players. By the inductive hypothesis,
\begin{align*}
    \sum_{i\in G} (1 - w_{i, k+1}^{(p)}) \le \sum_{i\in B} (1 - w_{i, k+1}^{(p)}) + \frac{\epsilon^4}{\sqrt{n\ln^6 n}}\cdot (k-1). 
\end{align*}
Subsequently, player $p$ derives the consensus 
weight vector $(w_{i,k+1})$ from $(w_{i,k+1}^{(q)})_{q\in [n], i\in [n]}$ by setting $w_{i,k+1}=w_{i,k+1}^{(i)}$, rounding the value down to 0 if it is at most $w_{\min}$. Therefore, 
\begin{align*}
\sum_{i\in G} (1 - w_{i, k+1}) &\le \sum_{i\in B} (1 - w_{i, k+1}) + \frac{\epsilon^4}{\sqrt{n\ln^6 n}}\cdot (k-1) + \sum_{i\in [n]} |w_{i, k+1}^{(p)} - w_{i, k+1}^{(i)}| + w_{\min{}} n_0,
\end{align*}
where $n_0$ is the number of players
whose weight is rounded 
down to $0$ after epoch $k$.

Hence, it suffices to show that 
$\sum_{i\in [n]} |w_{i, k}^{(p)} - w_{i, k}^{(i)}| + w_{\min{}}n_0\le \epsilon^4/\sqrt{n\ln^6 n}$.
By \autoref{lem:rising-tide-output}, the computed weight difference 
between player $p$ and any player $q$ 
can be bounded by twice the sum of all edge capacity differences ($\eta_E$),
since they agree on the vertex capacities ($\eta_V=0$).\footnote{Strictly speaking one player can know $c_V(i)$ and another player 
may be ignorant of it, but in this case all the edges incident to $i$ have capacity zero. This situation is indistinguishable from all players knowing and agreeing on $c_V$.}
According to \autoref{alg:weights-update-real-bb}, 
the edge capacities differ 
due to underlying disagreement on the $\CORR(i, j)$ values. Thus,

\begin{align}
|w_{q, k+1}^{(p)} - w_{q, k+1}^{(q)}| 
&\leq 2\eta_E
\leq 2\cdot \frac{8}{\epsilon^2 fmT}\sum_{i\neq j} \left| \CORR^{(p)}(i, j) - \CORR^{(q)}(i, j) \right|.
\label{eqn:etaE}
\end{align}
By \autoref{thm:blackboard}, two players may only disagree in up to $f$ cells of the blackboards
$$(\BB_{2(k-1)T+2},\BB_{2(k-1)T+4},\ldots,\BB_{2kT}), $$ i.e.,
those used to compute $\CORR$-values in epoch $k$.
Since the sum of each column in 
each blackboard is bounded by $\XMAX$, 
$|X_i^{(p)}(t)X_j^{(p)}(t) - X_i^{(q)}(t)X_j^{(q)}(t)|\leq 2\XMAX$.  
Each of the $f$ cells that $p$ and $q$ disagree about 
affects $n-1$ $\CORR$-values.
Therefore, 
the right hand side of (\ref{eqn:etaE}) is upper 
bounded by:
\begin{align*}
&\le 2\cdot \frac{8}{\epsilon^2 fmT}\cdot nf\cdot 2\XMAX
\\
&\le \frac{32 n\XMAX}{\epsilon^2 mT} \\
&\le \frac{ \sqrt{n}}{T}\tag{$m=\Omega(n\ln n/\epsilon^4)$ and $\XMAX = \Theta(\sqrt{m\ln n})$}\\
&=w_{\min{}}
\end{align*}
Now the inductive step for $k+1$ holds by noticing that
\begin{align*}
\sum_{i\in [n]}|w_{i, k+1}^{(p)} - w_{i, k+1}^{(i)}|  + w_{\min{}}n_0 &\le  2w_{\min{}}n\\
&\le \frac{\epsilon^4}{n^{3/2}\ln^{3} n}\cdot n  \tag{$T=\Omega(n^2\ln^3 n/\epsilon^4)$}\\
&= \frac{\epsilon^4}{\sqrt{n\ln^6 n}}.
\end{align*}
Since $k\leq \KMAX = 3f$, we conclude that \autoref{inv:weights} holds in every epoch, with high probability. 
That is, if $(w_i),G,B$ are the weight vector, good players, and bad players  at any point in time, then
\begin{align*}
\sum_{i\in G}(1-w_i) &\le \sum_{i\in B}(1 - w_i) + \frac{\epsilon^4}{\sqrt{n\ln^6 n}} \cdot \KMAX\\
&\le \sum_{i\in B}(1 - w_i) 
+ \epsilon^4 f  \tag{$\sqrt{n\ln^6 n} \geq 3$}.
\end{align*}
\end{proof}

The next observation and \autoref{lem:weight-update-progress} shows that the weight 
of every bad player becomes $0$ after running $\KMAX$ epochs of \WeightUpdate{}s 
without reaching agreement.

\begin{observation}\label{obs:weights-closed-to-zero}
For any $i$ and $k$, if there exists a player 
$p$ such that $w_{i, k}^{(p)}=0$, then $w_{i, k} = 0$.
\end{observation}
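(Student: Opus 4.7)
The plan is to reduce this to the Lipschitz property of \RisingTide{} together with the rounding rule defining $w_{i,k}$. The observation is vacuous when $k=1$, since all local weights equal $1$ by initialization, so I would focus on $k \geq 2$. In that case, $w_{i,k}$ is defined as $w_{i,k}^{(i)}$ rounded down to $0$ whenever $w_{i,k}^{(i)} \leq w_{\min}$, so it suffices to show that if any player $p$ computes $w_{i,k}^{(p)} = 0$, then $w_{i,k}^{(i)} \leq w_{\min}$.

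First I would compare the graphs $H_{k-1}^{(p)}$ and $H_{k-1}^{(i)}$ fed into \RisingTide{} by players $p$ and $i$. Their vertex capacities $c_V(j) = w_{j,k-1}$ come from the consensus weight vector, which is known to every participating player (and in the one edge case where some player is ignorant of $c_V(j)$, all edges incident to $j$ have capacity zero, so the graph behaves identically). Thus $\eta_V = 0$ when we invoke \autoref{lem:rising-tide-output}. The graphs differ only through the correlation scores $\CORR^{(p)}(i,j)$ used to define edge capacities.

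Next I would import the edge-capacity bound already proved inside \autoref{lem:maintaining-inv}: by \autoref{thm:blackboard}, players $p$ and $i$ disagree on at most $f$ cells of the blackboards of epoch $k-1$, each of which shifts at most $n-1$ correlation scores by at most $2\XMAX$. Combined with the normalization factor $8/(\epsilon^2 fmT)$ and the assumption $m = \Omega(n\ln n/\epsilon^4)$, this yields $2\eta_E \leq w_{\min} = \sqrt{n}/T$. \autoref{lem:rising-tide-output} then gives $|w_{i,k}^{(p)} - w_{i,k}^{(i)}| \leq 2\eta_E \leq w_{\min}$.

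Finally, if $w_{i,k}^{(p)} = 0$, the inequality above forces $w_{i,k}^{(i)} \leq w_{\min}$, so the rounding rule in \autoref{alg:weights-update-real-bb} declares $w_{i,k} = 0$, as claimed. The main ``obstacle'' is purely organizational: all of the analytic work was already done inside the proof of \autoref{lem:maintaining-inv}, and the proof here just extracts the pairwise Lipschitz bound and combines it with the threshold rounding.
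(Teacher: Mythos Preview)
Your proposal is correct and follows essentially the same approach as the paper: the paper's proof simply cites the bound $|w_{i,k}^{(p)} - w_{i,k}^{(i)}| \leq w_{\min}$ established inside the proof of \autoref{lem:maintaining-inv} and then applies the rounding rule, which is exactly what you do (with the derivation of that bound spelled out). Your added remark about the $k=1$ base case is a harmless clarification the paper omits.
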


\begin{proof}
In the proof of \autoref{lem:maintaining-inv} it was shown that 
$|w_{i, k}^{(p)} - w_{i, k}^{(i)}|\le \sqrt{n}/T = w_{\min{}}$,
hence if $w_{i,k}^{(p)}=0$, $w_{i,k}^{(i)}\leq w_{\min{}}$ and 
$w_{i,k}$ is rounded down to 0. 
See \autoref{alg:weights-update-real-bb}.
\end{proof}

\begin{lemma}
\label{lem:weight-update-progress}
If agreement has not been reached after $\KMAX=3f$ epochs, then with high probability,
there are $f$ bad 
players with weight $0$.
\end{lemma}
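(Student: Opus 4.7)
The plan is to show that over $\KMAX = 3f$ epochs without agreement, the cumulative bad-weight loss $\sum_k \Delta_k := \sum_k (|\mu_k|_{GB} + 2|\mu_k|_{BB})$ reaches $|B|$, so that every bad player's initial weight is completely exhausted and (after rounding below $w_{\min}$) becomes exactly $0$. Since $|B| \le f$, this yields the conclusion that $f$ bad players have weight $0$.

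First I would observe that the adversary can corrupt at most $f$ players in total, so at most $f$ epochs contain a new corruption. This leaves at least $\KMAX - f = 2f$ ``clean'' epochs in which $G$ and $B$ are fixed. In each clean epoch $k$, assuming inductively that \autoref{inv:weights} holds (via \autoref{lem:maintaining-inv}), I would invoke Part~2 of \autoref{lem:n/3-gap-lemma}: with high probability, the $G \times B$ portion of the excess correlation graph carries total edge capacity $\sum_{\{g,b\}} c_E(\{g,b\}) \ge 1$ (after the normalization $\tfrac{8}{\epsilon^2 f m T}$). Part~1 additionally ensures that every $G \times G$ edge has zero capacity, so good players only lose weight through matchings to $B$.

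The core step is to turn this per-epoch edge-capacity bound into a cumulative bad-weight loss of at least $|B|$. The maximality of the \RisingTide{} output $\mu_k$ implies that every edge $\{g,b\}$ with $c_E(\{g,b\}) > 0$ either has $\mu_k(\{g,b\}) = c_E(\{g,b\})$ (the edge is saturated) or has a saturated endpoint (one of $g, b$ has used up its full remaining weight in $\mu_k$). A saturated bad vertex $b$ contributes its full $w_{b,k}$ to $\Delta_k$; a saturated good vertex $g$ has all its matching weight sitting on edges to $B$ (because $c_E = 0$ on $G\times G$ pairs), contributing $w_{g,k}$ to $|\mu_k|_{GB}$ and hence to $\Delta_k$. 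A charging argument attributes each unit of $G\times B$ edge capacity either to matched mass (directly into $\Delta_k$) or to a saturated endpoint; when that saturated endpoint is good, the corresponding good-weight expenditure is already mirrored by an equal amount of bad-weight loss incurred in this or a previous epoch, via \autoref{inv:weights}. Summing over the $\ge 2f$ clean epochs gives $\sum_k \Delta_k \ge |B|$; since this cumulative loss is also bounded above by $|B|$, equality holds and every bad player is reduced to $0$ weight (with the $w_{\min}$-rounding in \autoref{alg:weights-update-real-bb} giving exact $0$).

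The hard part is the charging/amortization argument: a single clean epoch need not force $\Delta_k \ge 1/2$---for example, if the only capacitated $G \times B$ edge has a good endpoint whose weight has already been heavily reduced in earlier epochs, then $|\mu_k|$ in this epoch is tiny. The per-epoch edge-capacity bound must therefore be combined with \autoref{inv:weights}, which ties the rate of good-weight loss to the rate of bad-weight loss, so that ``small current $\Delta_k$'' can only happen after ``large prior $\sum_{k'<k} \Delta_{k'}$.'' Marrying this amortization with the $\ge 1$ per-clean-epoch edge capacity bound is what ultimately drives all bad weight to zero within $O(f)$ epochs.
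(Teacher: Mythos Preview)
Your outline departs from the paper's proof, and the charging step has a genuine gap. The paper argues a direct per-epoch dichotomy, with no appeal to \autoref{inv:weights}: in each clean epoch the total edge capacity incident to $B$ is at least $1$, and maximality of $\mu_k$ then forces either (a) some $b\in B$ with $w_b>w_{\min}$ to be saturated (so $w_b\to 0$ via \autoref{obs:weights-closed-to-zero}), or (b) the total weight of $B$ to drop by at least $1$. Case~(a) happens at most $f$ times, case~(b) at most $f$ times (since total bad weight starts at $\le f$), and at most $f$ epochs contain a new corruption, giving the $3f$ bound directly.

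The gap in your proposal is the charging step. You charge each unit of $G\times B$ edge capacity either to matched mass in $\Delta_k$ or to a saturated endpoint, and when that endpoint is a good vertex $g$ you invoke \autoref{inv:weights} to say $g$'s expenditure is ``mirrored by an equal amount of bad-weight loss incurred in this or a previous epoch.'' But the capacity you are charging to $g$ can far exceed $w_{g,k}$ --- a single saturated good vertex of small weight can block many units of incident edge capacity --- while the bad-weight loss it actually accounts for (in this epoch directly, or via the invariant cumulatively) is only of order $w_{g,k}$. Worse, nothing prevents the \emph{same} prior bad-weight loss from being invoked as the ``mirror'' in several later epochs, so the bookkeeping double-counts. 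Your concluding summation ``$\sum_k \Delta_k \ge |B|$'' is therefore unsupported: what you have established is $\sum_k (\text{capacity}_k) \ge 2f$, and the shortfall between capacity and $\Delta_k$ in each epoch is precisely what remains unbounded. The paper's resolution is to avoid cross-epoch amortization entirely and instead extract the dichotomy (a)/(b) from maximality in each clean epoch separately.
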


\begin{proof}
There are at most $f$ epochs in which the adversary corrupts at 
least one player.
We argue below that for all other epochs, in the call to 
\WeightUpdate, the sum of the capacities of edges with at least one endpoint in $B$ is at least 1.
This implies that in each iteration of \WeightUpdate,
either some $i\in B$ with $c_V(i)=w_i>w_{\min{}}$ becomes
saturated (and thereafter $w_i=0$ by \autoref{obs:weights-closed-to-zero}), 
or the total weight of all players in $B$ 
drops by at least 1. 
Each of these cases also occurs at most $f$ times,
hence $\KMAX=3f$ epochs suffice to push the
weight of $f$ bad players to zero, with high probability.

We now prove that the sum of the capacities of edges with at least one endpoint in $B$ is at least 1. 
\begin{align*}
\sum_{(i,j) \in [n] \times B} c_E(i, j)
    &= \frac{8}{\epsilon^2 fm T} \sum_{(i,j)\in [n] \times B} \max\{0, \CORR(i, j) - w_{i, k}w_{j, k}\beta\}  \\
    &\geq \frac{8}{\epsilon^2 fm T} \sum_{(i,j)\in G \times B} \max\{0, \CORR(i, j) - w_{i, k}w_{j, k}\beta\}  \\
    &\ge \frac{8}{\epsilon^2 fmT} \left( \frac{1}{8}\epsilon^2 fmT\right)  \tag{by \autoref{lem:n/3-gap-lemma}(2)}\\
    &= 1.\qedhere
\end{align*}
\end{proof}

\begin{lemma}\label{lem:good-processes-agree}
Suppose \autoref{inv:weights} holds in an 
epoch in which bad players have zero weight.
With high probability, \BrachaAgreement{} 
terminates with agreement in this epoch.
\end{lemma}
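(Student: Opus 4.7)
The plan is to show that in each iteration $t$ of \BrachaAgreement{} during this epoch, with probability $1-o(1)$ the resulting configuration forces every good player to \textbf{decide} by iteration $t+1$. Since the adversary must commit its strategy for iteration $t$ before observing that iteration's fresh coin flips, the per-iteration failure events multiply across iterations, so the probability of failing in all $T$ iterations is at most $o(1)^T = n^{-\omega(1)}$, well within the high-probability regime.

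Fix an iteration $t$ and split on $|\Akeep(t)|$. If $|\Akeep(t)| \geq f+1$, then \autoref{lem:Akeep-large} immediately yields that every player outputs $v^*$ from \CoinFlip{}, and a routine walk through Bracha's three broadcasts in iteration $t+1$ shows that every good player decides $v^*$. The interesting case is $|\Akeep(t)| \leq f$, where the goal is to apply \autoref{lem:Akeep-small} by showing $|\bias^{(p)}(t) + \Sigma^{(p)}(t)| > f$ with probability $1-o(1)$. Because bad players have weight $0$ in this epoch, they contribute nothing to $\Sigma^{(p)}(t)$; combining this with \autoref{thm:blackboard}, $\Sigma^{(p)}(t)$ differs from the ``true'' weighted sum $\Sigma_G(t) = \sum_{q\in G} w_q X_q(t)$ by at most $f$, and likewise $|\bias^{(p)}(t) - \biasbar(t)| \leq f$. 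So it suffices to show $|\biasbar(t) + \Sigma_G(t)| > 3f$ with probability $1-o(1)$.

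The heart of the argument is anti-concentration of $\Sigma_G(t)$. By \autoref{lem:bias-properties}(2), $\biasbar(t)$ is fixed before any good player writes to $\BB_{2t}$, so it is adversarially chosen but independent of $\Sigma_G(t)$. By Lemmas~\ref{lem:weight-bound} and~\ref{lem:expected-sum-squared}, $\Var(\Sigma_G(t)) \geq \frac{1}{2}\epsilon^2 mf$, and with $m = \Omega(n\ln n/\epsilon^4)$ its standard deviation $\sigma = \Omega(\epsilon\sqrt{mf})$ exceeds $f$ by a factor of $\Omega(\sqrt{\ln n}/\epsilon)$. A Berry-Esseen-style estimate, applied to the sum of independent symmetric $\{-w_q,0,+w_q\}$ increments comprising $\Sigma_G(t)$, then bounds the probability that $\Sigma_G(t)$ lands in any fixed interval of length $6f$ by $O(f/\sigma) + o(1) = o(1)$, giving what we need. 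The main obstacle is making this anti-concentration bound uniform over all adversarial choices: in particular, the adversary also controls via scheduling how many good coin flips are actually written to $\BB_{2t}$ (between $m(n-2f)$ and $mn$), so the argument must be robust to this ``optional stopping'' of the flipping process. The submartingale technique used in the proof of \autoref{lem:expected-sum-squared} suggests a template for extending second-moment control to the third-moment / CDF-level control required by Berry-Esseen.
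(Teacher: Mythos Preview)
Your approach differs sharply from the paper's, and the central per-iteration claim does not hold. The paper's proof is a two-line contrapositive: the argument establishing \autoref{lem:n/3-gap-lemma}(2) shows that in any epoch where agreement is \emph{not} reached, with high probability $-\sum_{(i,j)\in G\times B}\CORR(i,j)\geq \tfrac{1}{4}\epsilon^2 fmT>0$. But here every $j\in B$ has $w_j=0$, so each $\CORR(i,j)=w_iw_j\sum_t X_i(t)X_j(t)=0$, a contradiction. No per-iteration reasoning is needed at all.

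The gap in your proposal is the claim that $\Pr\bigl[|\biasbar(t)+\Sigma_G(t)|>3f\bigr]=1-o(1)$. Berry--Esseen applies to a sum with a \emph{fixed} number of independent terms, whereas $\Sigma_G(t)$ is an adaptively stopped sum: the scheduler decides which good columns are full and how many entries each of the (up to $f$) partial good columns receives, \emph{after seeing the flips as they arrive}. Optional stopping is precisely the mechanism that lets a random walk be steered into a short target window, and the paper's own Remark immediately following this lemma confirms the effect: even with all bad weights zero, the adversary can ``create disagreement on the outcome,'' and in the extremal weight configuration described there the per-iteration probability of an unambiguous outcome is only on the order of $1/\sqrt{n}$. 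Your appeal to the submartingale method of \autoref{lem:expected-sum-squared} goes in the wrong direction: that lemma shows $\E[\Sigma_G(t)^2]$ can only \emph{increase} under optional stopping, which says nothing about the probability of landing in a short interval. A per-iteration route is not hopeless---since $T=\tilde{\Omega}(n^2/\epsilon^4)$, even an $\Omega(1/n)$ success probability per iteration would suffice---but proving such a bound uniformly over all adaptive schedules and all weight vectors satisfying \autoref{inv:weights} is a separate task, and the paper's aggregate-correlation argument bypasses it entirely.
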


\begin{proof}
The proof of \autoref{lem:n/3-gap-lemma}(2) 
states that with high probability, 
$-\sum_{(i,j)\in G\times B} \CORR(i,j) \geq \frac{1}{4}\epsilon^2 fmT >0$
in any epoch in which \BrachaAgreement{} fails to reach agreement.
On the other hand, by assumption $-\sum_{(i,j)\in G\times B} \CORR(i,j)=0$.
Thus, with high probability \BrachaAgreement{} reaches agreement 
in this epoch.
\end{proof}

\begin{remark}
Assuming the preconditions of \autoref{lem:good-processes-agree},
only good players participate and flip fair coins.
Nonetheless, the output of \CoinFlip{} 
need not be close to unbiased.
The scheduling power of the adversary is still strong 
enough to
heavily bias the outcome of the coin flip or create disagreement on the outcome.
More careful calculations show the probability of an unambiguous $-1$ or unambiguous $+1$ outcome are 
each $\Omega(1/\sqrt{n})$.  This occurs when
$f+1$ good players have total weight about $1$ 
and the remaining $f$ good players each have weight 1.
\end{remark}

\begin{theorem}
Suppose $n=(3+\epsilon)f$ where $\epsilon>0$, 
$m=\Theta(n\ln n/\epsilon^4)$, and 
$T=\Theta(n^2\ln^3 n/\epsilon^4)$.
Using the implementation of $\CoinFlip$ from \autoref{sect:implementation-of-coinflip},
\BrachaAgreement{} solves Byzantine agreement 
with probability 1 in the full information, 
asynchronous model against an adaptive adversary. 
In expectation the total latency 
is $\tilde{O}(n^4/\epsilon^8)$.
The local computation of each player 
is polynomial in $n$.
\end{theorem}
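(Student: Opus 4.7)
The plan is to assemble Lemmas~\ref{lem:maintaining-inv}, \ref{lem:weight-update-progress}, and~\ref{lem:good-processes-agree} into a chain that gives termination within $O(f)$ epochs with high probability, and then to promote this to probability-$1$ termination with the claimed expected latency via a Borel--Cantelli argument. \textbf{Validity} and \textbf{Agreement} are inherited essentially unchanged from Bracha's framework: the reliable broadcast and validation mechanisms imply that if every uncorrupted player begins with the same input $v$, then no validated Line~2 message can carry $-v$ (since no set of $n-f$ Line~1 messages can justify $-v$ when $n>3f$), so all good players decide $v$ in the first iteration; and the standard case analysis on $|\Akeep|$ combined with Lemmas~\ref{lem:Akeep-large} and~\ref{lem:Akeep-small} shows that once any good player decides $v^*$, all good players decide $v^*$ within one further iteration of \BrachaAgreement.

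For \textbf{Termination} with high probability, I would chain the three lemmas from \autoref{sect:bounding-error}. By \autoref{lem:maintaining-inv}, \autoref{inv:weights} is maintained with high probability throughout all $\KMAX = 3f$ epochs. By \autoref{lem:weight-update-progress}, with high probability either \BrachaAgreement{} has already terminated within those $\KMAX$ epochs or the weights of all $f$ bad players have been driven to zero. In the latter case, \autoref{lem:good-processes-agree} forces agreement in the following epoch with high probability. A union bound over these events yields termination within $O(f)$ epochs with probability $1 - n^{-\Omega(c)}$, where $c$ is the hidden constant governing ``high probability''.

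To upgrade this to probability-$1$ termination and to bound the expected latency, the key observation is that the protocol is self-similar across blocks of epochs: after a failed block, the weight vector remains valid (weights only decrease), the coin flips in future epochs are fresh and independent of past outcomes, and the high-probability analysis re-applies verbatim. Hence each block of $O(f)$ epochs succeeds with probability at least $1 - n^{-\Omega(c)}$, so the expected number of blocks is $O(1)$ and termination holds almost surely by Borel--Cantelli. Each iteration of \BrachaAgreement{} spends $O(m)$ latency constructing the two blackboards by \autoref{thm:blackboard}(4); each epoch contributes $O(mT)$ latency; and $O(f)$ epochs with $m = \Theta(n\ln n/\epsilon^4)$ and $T = \Theta(n^2\ln^3 n/\epsilon^4)$ give expected total latency $\tilde{O}(n^4/\epsilon^8)$. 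Local computation is polynomial: correlation scores are $O(n^2)$ sums of $O(T)$ bounded terms, and \RisingTide{} runs in $O(n^2)$ iterations, each saturating at least one vertex or edge.

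The main obstacle I anticipate is the distinction between high-probability and probability-$1$ termination. It must be verified that no high-probability event in \autoref{sect:neg-correlations} or \autoref{sect:bounding-error} is silently relied upon as a \emph{precondition} for the next block even to be well-defined, so that the self-similarity argument actually goes through. Because weights are monotonically non-increasing, \WeightUpdate{} is a deterministic function of historical views, and the random coin flips in different epochs are independent conditional on the adversary's strategy, this reduces to a careful but routine check that the protocol remains well-posed under any weight vector produced by prior epochs, whether or not \autoref{inv:weights} held there.
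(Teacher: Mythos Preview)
Your high-level assembly of Lemmas~\ref{lem:maintaining-inv}, \ref{lem:weight-update-progress}, and~\ref{lem:good-processes-agree} is correct and matches the paper, and your latency calculation is fine. The gap is in how you upgrade high-probability termination to probability-$1$ termination.

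You argue that the protocol is ``self-similar across blocks of epochs'' and that ``after a failed block, the weight vector remains valid (weights only decrease) \ldots\ and the high-probability analysis re-applies verbatim.'' This is not correct. The entire chain of lemmas in \autoref{sect:neg-correlations} and \autoref{sect:bounding-error} (in particular \autoref{lem:weight-bound}, hence \autoref{lem:expected-sum-squared}, hence \autoref{lem:n/3-gap-lemma}(2), hence \autoref{lem:weight-update-progress}) requires \autoref{inv:weights} as a hypothesis, not merely that weights lie in $[0,1]$. If the first block of $\KMAX+1$ epochs fails, that failure is precisely the low-probability event in which \autoref{inv:weights} may have been violated (e.g., good--good pairs were blacklisted because Part~1 of \autoref{lem:n/3-gap-lemma} failed). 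Entering the next block with such a weight vector, the good players may no longer hold a weighted majority among any $n-f$ participants, and none of the gap lemmas apply. Your final paragraph flags exactly this issue but then dismisses it as ``routine''; it is not.

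The paper closes this gap differently: after epoch $\KMAX+1$ it \emph{resets all weights to $1$} and restarts from epoch~$1$. This trivially restores \autoref{inv:weights} at the start of each block, so the high-probability analysis genuinely re-applies, and a geometric argument gives probability-$1$ termination with the stated expected latency. Your proposal needs either this explicit reset or some other mechanism guaranteeing \autoref{inv:weights} at the start of every block.
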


\begin{proof}
By \autoref{lem:weight-update-progress}, 
if the players have failed to reach agreement after $\KMAX=3f$ epochs, 
all bad players' weights are zero with high probability.
By \autoref{lem:good-processes-agree}, 
the players reach agreement in the next 
epoch with high probability.
If, by chance, the players have not reached 
agreement after epoch $\KMAX+1$, 
they reset all weights to 1 and restart 
the algorithm at epoch 1.
Thus, the algorithm terminates with probability 1.

By \autoref{thm:blackboard} the 
latency to construct $\BB_{t'}$ is $O(m(t'))$, 
so if the algorithm is not restarted, 
the total latency is 
$O((\KMAX+1) m T) = O(fmT) = \tilde{O}(n^4/\epsilon^8)$.
Even after incorporating the possibility of restarts, 
this latency bound holds
with high probability and in expectation.
\end{proof}

\section{Conclusion}\label{sect:conclusion}

Our main result is a randomized agreement protocol in the full-information model
resilient to $f = n/(3+\epsilon)$ adaptive Byzantine failures.
When $\epsilon$ is bounded away from zero it
has expected latency $\tilde{O}(n^4)$ 
and in the extreme case when 
$\epsilon=1/f$ ($n=3f+1$)
it has latency
$\tilde{O}(n^{12})$.
This is the first improvement to Bracha's~\cite{Bracha1987} 
1984 protocol when $n=3f+1$,
which had exponential expected latency $2^{O(n)}$.
The prior results of King and Saia~\cite{KingS2016,KingS2018}
and Huang, Pettie, and Zhu~\cite{HuangPZ22} 
offer different tradeoffs between resiliency, 
latency, and local computation time; see Table~\ref{tab:BA}.

Our goal was to present the simplest possible protocol with polynomial latency and resiliency $f<n/3$, so we have made no 
attempt to optimize the latency 
prematurely.\footnote{For example, there is a mismatch in the two terms 
of 
\autoref{lem:sumSGsquared} ($\frac{1}{4}\epsilon^2 mfT - mn\sqrt{T(c\ln n)}$), 
the first achieves its worst case in later epochs, after close to $2f$ units of weight have been deducted from $(w_i)$,
and the second achieves its worst case in the beginning, when $w_i=1$.  Similarly \autoref{lem:biasSG} achieves its 
worst case when $w_i=1$.  These mismatches lead to the introduction of additional $\epsilon^{-1}$ factors
in the proof of \autoref{lem:n/3-gap-lemma}(2).}
We see several interesting directions for future work.
\begin{itemize}
\item
If one tried to stay broadly within our coin-flipping framework
but maximally optimize the parameters, one would find that 
$m=\Omega(n)$ and $T=\Omega(n^2)$.  The bound on $m$ is necessary to overcome the $\pm f$ discrepancies introduced by the adversarial scheduler,
and the bound on $T$ is the time needed for the small negative 
aggregate good-bad correlations to dwarf the random noise of good-good correlations.  
Thus, $\Omega(n^3)$ latency seems to be baked into this approach when $n=3f+1$, even just to reliably blacklist \emph{any} pair of players
and maintain \autoref{inv:weights}.
Is it possible to solve Byzantine Agreement 
in \emph{close} to $O(n^3)$ latency?
\item Let 
$n^{g(\epsilon)}$ be the optimum latency 
of a Byzantine Agreement protocol 
with resiliency $f=n/(3+\epsilon)$ 
against an adaptive adversary.
What does the function $g$ look like 
and what is 
$\lim_{\epsilon\rightarrow \infty} g(\epsilon)$?  
From~\cite{KingS2016} we know that  
$\lim_{\epsilon\rightarrow \infty} g(\epsilon) \leq 2.5$, 
at least for protocols with exponential local computation,
and from~\cite{Aspnes98,attiya2008tight} we know $\lim_{\epsilon\rightarrow \infty} g(\epsilon) \geq 1$.
What is the correct limit of $g(\epsilon)$?  
Are there qualitatively different
protocols achieving latency $n^{g(\epsilon)}$ for various ranges of $\epsilon$?  One can also look at the 
optimal latency-resiliency tradeoff when 
$f=n^\gamma$, $\gamma\in (0,1)$, 
is expressed as a polynomial of $n$.
\item Each step in the protocols we use (Bracha's~\cite{Bracha1987} \ReliableBroadcast, Huang et al.'s~\cite{HuangPZ22} \IteratedBlackboard, and the \CoinFlip{} protocol 
of \autoref{sect:implementation-of-coinflip})
typically consists of sending a message to all players
and waiting for $n-f$ messages before making some state transition.
If we were to wait for $n-f+1$ messages, we may wait forever if 
$f$ players crashed and never sent any messages.
On the other hand, once we introduce \emph{blacklisting} 
it is not clear that waiting for just 
$n-f$ messages is necessary anymore.
For example, 
suppose that $\sum_i w_i = n - 2\rho f$ and that we have reduced
the weight of good and bad 
players each by $\rho f$, with high probability.
Rather than wait for $n-f$ messages, 
we could wait for 
messages from players whose 
\emph{total weight} 
is at least $n-(\rho+1)f$.\footnote{More generally, 
players should make a state transition only when it 
is \emph{plausible, w.h.p.}, that all un-received messages
were to be sent by Byzantine players.}
This would help speed up later epochs since we could
then access the weight advantage of 
good players.
However, since there is \emph{some} non-zero probability of blacklisting 
pairs of good players, there is \emph{some} 
non-zero probability that a protocol 
will deadlock
if it waits for $n-(\rho+1) f$ 
weight before proceeding.
This raises the possibility 
that there is a complexity separation between
\emph{Las Vegas} (which never err) 
and \emph{Monte Carlo} protocols
(which may deadlock) for Byzantine Agreement in our model. Cf.~\cite{KapronKKSS10}.
\end{itemize}


\newcommand{\etalchar}[1]{$^{#1}$}

\end{document}